\newtheorem{theorem}{Theorem}
\newtheorem{proof}{Proof}
\begin{document}

\title{Opportunistic Routing Aided Cooperative Communication Network with Energy Harvesting}

\author{Wannian An, Chen Dong*, Xiaodong Xu, Chao Xu, Shujun Han, and~Lei Teng 


\thanks{This work was supported in part by $\sim$. \emph{(Corresponding author: Chen Dong.)}

Wannian An, Chen Dong, Xiaodong Xu, Shujun Han and Lei Teng are with the State Key Laboratory of Networking and Switching Technology, Beijing University of Posts and Telecommunications, Beijing, 100876, China. (e-mail: anwannian2021@bupt.edu.cn; dongchen@bupt.edu.cn; xuxiaodong@bupt.edu.cn; hanshujun@bupt.edu.cn; tenglei@bupt.edu.cn;).

Chao Xu is with the School of Electronics and Computer Science, University of Southampton, SO17 1BJ Southampton, U.K. (e-mail: cx1g08@ecs.soton.ac.uk).}
}

{}

\maketitle

\begin{abstract}
In this paper, a cooperative communication network based on energy-harvesting (EH) decode-and-forward (DF) relays that harvest energy from the ambience using buffers with harvest-store-use (HSU) architecture is considered. An opportunistic routing (OR) protocol, which selects the transmission path of packet based on the node transmission priority, is proposed to improve data delivery in this network. Additionally, an algorithm based on state transition matrix (STM) is proposed to obtain the probability distribution of the candidate broadcast node set. Based on the probability distribution, the existence conditions and the theoretical expressions for the limiting distribution of energy in energy buffers using discrete-time continuous-state space Markov chain (DCSMC) model are derived. Furthermore, the closed-form expressions for network outage probability and throughput are obtained with the help of the limiting distributions of energy stored in buffers. Numerous experiments have been performed to validate the derived theoretical expressions.
\end{abstract}

\begin{IEEEkeywords}
Energy-harvesting, opportunistic routing, state transition matrix, integral equation.
\end{IEEEkeywords}

\IEEEpeerreviewmaketitle

\section{Introduction}
\IEEEPARstart{I}{n} recent years, as a promising technology, energy-harvesting (EH) has drawn researchers' substantial attention due to its capability of harvesting energy from the surrounding ambient energy sources such as light energy, thermal energy and radio frequency (RF) energy, etc. \cite{1}. Specifically, EH is widely employed in cooperative wireless communication to prolong the lifetime of traditional energy-constrained cooperative wireless communication networks \cite{2}.

The relay selection method considering both the energy state of relay buffer and channel state information (CSI) for EH wireless body area network is shown in \cite{3}. In \cite{4} and \cite{5}, the power splitting protocol is employed at the EH relay of the EH-based cooperative communication network to obtain a tradeoff between the transmission energy and decoding energy. In these studies mentioned above, the energy is harvested with the harvest-use (HU) structure, which means that the harvested energy will be used immediately.

Another efficient method of harvesting energy from the environment is to use the harvest-store-use (HSU) architecture, which allows the harvested energy to be stored in an energy buffer or super-capacitor for later use. A wireless powered communication system with an EH node using HSU architecture, which harvests energy from the RF signals in the downlink and uses the stored energy to transmit data in the uplink, is considered in \cite{6}, \cite{34} and \cite{7}, where the limiting distribution of stored energy in energy buffer at the EH node is derived using discrete-time continuous-state space Markov chain (DCSMC) model. The online and offline optimization algorithms for joint relay selection and power control have been discussed in \cite{8}, which aim to maximize the end-to-end system throughput under the constraints of data and energy storage. In \cite{9}, an EH-based two-way relaying network is studied, where the relay uses the energy harvested from ambient RF signals to drastically reduce the battery energy consumption. In \cite{10}, two time switching policies of the EH relay equipped with energy and data buffers are proposed to maximize the throughput of the cooperative wireless network. To improve the performance of the energy-constrained cooperative communication network, \cite{11} proposes a relay selection scheme based on the status of both energy buffer and data buffer. A cooperative cognitive radio network with two Internet of Things (IoT) devices serving as the relays has been investigated in \cite{12}, where the IoT devices employ a time-splitting-based approach for harvesting energy from the RF signals received from a pair of primary users and processing information. Furthermore, the exact expressions of outage probability for the IoT cooperative communication system under Nakagami-m fading is derived.

In these works mentioned above, when the path loss in practical communication link is high, it is not efficient for the relays to harvest energy from RF signals received from the source nodes \cite{13}. Another strategy with greater practical interest is to build the self-sustaining node (SSN) which harvests energy from the ambience. More explicitly, the two-hop relay networks with an HSU-architecture-based EH relay are considered in \cite{31}, \cite{14} and \cite{15}, where the theoretical expressions for the limiting distribution of energy stored in buffer employing best-effort and on-off policies are derived to analyze the system outage probability and throughput. Different from \cite{14}, a feedback strategy is introduced into the system considered in \cite{15}, which further improves the spectral efficiency of the system. Furthermore, \cite{33} and \cite{16} introduce the two-hop relay networks with a self-sustaining source and a self-sustaining relay, where both the source and the relay harvest energy from the ambience. In \cite{32} and \cite{17}, the performance of a two-hop relay network with a self-sustaining source harvesting energy from the ambience and a data buffer-aided relay is analyzed, where three simple link selection schemes based on buffer status or channel knowledge are adopted to maximize the throughput of the relay network.

It can be found that the above studies are mainly based on a wireless two-hop relay network with relays with the same priority (i.e. they cannot communicate with each other). However, the practical wireless cooperative network often utilizes the wireless multi-hop network with distributed topology, such as the sensor network \cite{18}. For the sake of further improving performance, the wireless multi-hop network based on opportunistic routing (OR) protocol has been widely studied. Specifically, an efﬁcient OR protocol based on both cross-layer information exchange and energy consumption in an ad hoc network has been studied in \cite{19} and \cite{20}. Additionally, the OR protocol considering both global optimization and local optimization is proposed for the dependent duty-cycled wireless sensor network to minimize the end-to-end latency \cite{21}. In order to realize the tradeoff between routing efficiency and computational complexity, multi-objective optimization and Pareto optimality are introduced into the power control-based OR for wireless ad hoc networks \cite{22}. In \cite{23}, the OR utilizing the network-based candidate forwarding set optimization scheme is introduced to reduce the transmission delay and avoid duplicate transmission in wireless multi-hop networks. In order to reduce the energy consumption and improve the data delivery ratio in underwater wireless sensor networks, Coutinho and  Boukerche \cite{24} design two candidate set selection heuristics to jointly select the most suitable acoustic modem and next-hop forwarder candidate nodes for the current hop. Furthermore, in \cite{25}, a reliable reinforcement learning-based OR is proposed in the underwater acoustic sensor network.

According to the above researches, it can be easily found that although EH and OR have been widely studied in cooperative wireless networks, less attention has been devoted to the joint application of EH and OR in cooperative wireless networks. This article aims to study OR-aided cooperative communication networks with EH. Table \ref{tab0} shows the comparison between our work and the above references. More specifically, decode-and-forward (DF) relays are powered by harvested energy from the ambience using the infinite-size buffers with HSU architecture. Additionally, the DCSMC model is used to model energy buffers. The main contributions of this work are as follows:
\begin{enumerate}
  \item In this paper, a cooperative communication network based on EH DF relays is considered. In order to improve the data delivery in this network, an OR protocol for selecting the packet transmission path based on the node transmission priority is proposed.
  \item An algorithm for finding the probability distribution of the candidate broadcast node (CBN) set is proposed based on the STM.
  \item Based on the DCSMC model and the probability distribution of the CBN set, the existence conditions and the theoretical expressions for the limiting distribution of energy in energy buffers are derived.
  \item Based on the limiting distributions of energy stored in buffers, the closed-form expressions are derived for system outage probability and throughput. Additionally, the derived analytical closed-forms and the theoretical analysis given are numerically validated.
\end{enumerate}

\begin{table}
\centering
\caption{Comparison of references. Where N indicates that the technology is not adopted in the studied system, and Y indicates that the technology is adopted in the studied system.}
\label{table}
\renewcommand{\arraystretch}{1.5}
\begin{tabular}{|c|c|c|c|}
\hline
Reference & EH buffer architecture & SSN & OR\\
\hline
\cite{3}, \cite{4}, \cite{5} & HU & N & N\\
\hline
\begin{tabular}[c]{@{}c@{}}\cite{6}, \cite{34}, \cite{7}, \cite{8}, \\
\cite{9}, \cite{10}, \cite{11}, \cite{12} \end{tabular} & HSU & N & N\\
\hline
\begin{tabular}[c]{@{}c@{}}\cite{31}, \cite{14}, \cite{15}, \\
\cite{33}, \cite{16}, \cite{32}, \cite{17} \end{tabular} & HSU & Y & N\\
\hline
\begin{tabular}[c]{@{}c@{}}\cite{19}, \cite{20}, \cite{21}, \\
\cite{22}, \cite{23}, \cite{24}, \cite{25} \end{tabular} & No EH & N & Y\\
\hline
Our work & HSU & Y & Y\\
\hline
\end{tabular}
\label{tab0}
\end{table}

The remainder of this paper is organized as follows. In section II, the system model and the proposed OR protocol is described. In section III, the limiting distributions of energy stored in buffers are shown. In section IV, the expressions are derived for system outage probability and throughput. Simulation and theory performance results are presented in Section V. Finally, the conclusion is presented in Section VI.

$Notations:$ $A\sim\mathcal{CN}(0, \theta)$ indicates the random variable $A$ follows the complex Gaussian distribution with mean 0 and variance $\theta$. The absolute value of $B$ is denoted by $|B|$. $\mathbb{E}[\cdot]$ denotes the expectation operator. $C_1 \cap C_2$ denotes that both the condition $C_1$ and the condition $C_2$ are satisfied, while $C_1 \cup C_2$ means that at least one of condition $C_1$ and condition $C_2$ is satisfied. $W(\cdot)$ is the Lambert W function. Boldface capital and lower-case letters stand for matrices and vectors, respectively. In addition, $\overline{C_1}$ represents the opposite of condition $C_1$. $||\cdot||_{2}$ denotes the Euclidean norm of vector.

\section{System Model and OR Protocol}

\subsection{System Model}

As illustrated in Fig. \ref{fig1}, the network considered in this paper consists of a source node $S$,  a destination node $D$, and two DF relay nodes $R1$ and $R2$. In this network, all nodes operate in half duplex mode. The fixed power supply is assumed in both $S$ and $D$, while $R1$ and $R2$ are equipped energy buffer shown in Fig. \ref{fig2} and powered solely by the energy harvested from the ambience. More especially, in each time slot, there is only one activated node transmit signals to corresponding receiving nodes, while the inactive nodes keep silent or receive signals transmitted from the activated node. Additionally, the quasi-static Rayleigh fading channel model is assumed between all nodes, then, the channel coefficients within the $i$-th time slot between $S$ and $D$, $S$ and $R1$, $S$ and $R2$, $R1$ and $R2$, $R1$ and $D$, and $R2$ and $D$ can be denoted by $h_{SD}(i)\sim\mathcal{CN}(0, d_{SD}^{-\alpha})$, $h_{SR1}(i)\sim\mathcal{CN}(0, d_{SR1}^{-\alpha})$, $h_{SR2}(i)\sim\mathcal{CN}(0, d_{SR2}^{-\alpha})$, $h_{R1R2}(i)\sim\mathcal{CN}(0, d_{R1R2}^{-\alpha})$, $h_{R1D}(i)\sim\mathcal{CN}(0, d_{R1D}^{-\alpha})$ and $h_{R2D}(i)\sim\mathcal{CN}(0, d_{R2D}^{-\alpha})$, respectively, where $d_{nm}$ denotes the distance between the node $n \in [S, R1, R2]$ and the node $m \in [R1, R2, D]$, and $\alpha$ is the path-loss parameter.
\begin{figure}[H]
\centerline{\includegraphics[width=3.5in]{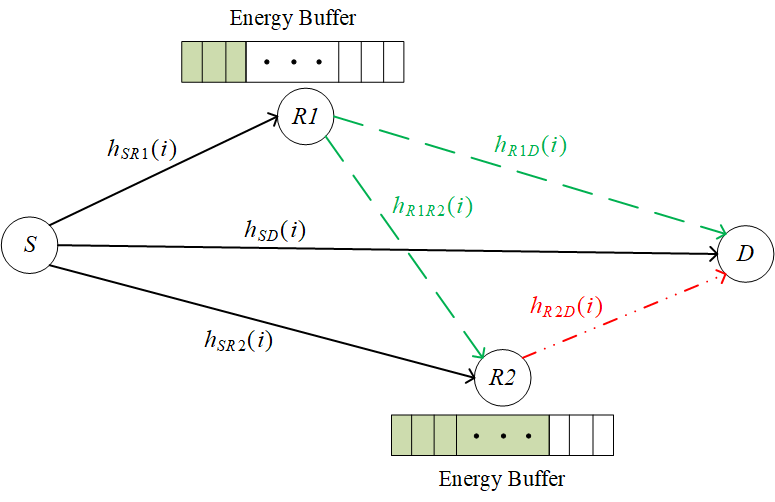}}
\caption{System model.\label{fig1}}
\end{figure}

\begin{figure}[H]
\centerline{\includegraphics[width=3.5in]{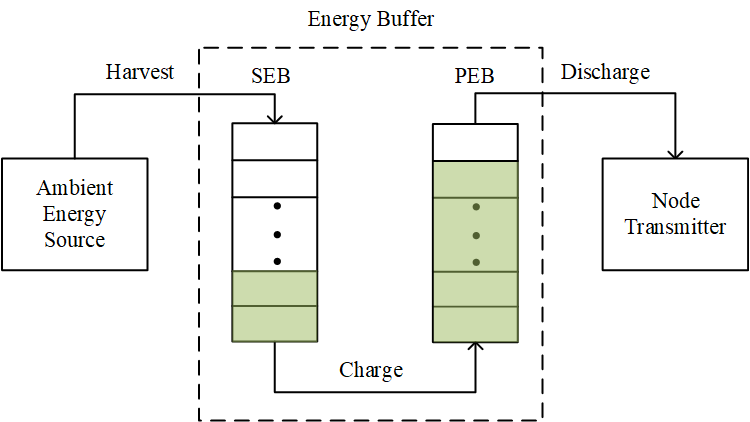}}
\caption{HSU energy harvesting architecture.\label{fig2}}
\end{figure}

The HSU energy harvesting architecture adopted by the energy buffers equipped by $R1$ and $R2$ is depicted in Fig. \ref{fig2}. It can be seen from Fig. \ref{fig2} that the architecture mainly consists of an infinite-size primary energy buffer (PEB) and an infinite-size secondary energy buffer (SEB). Particularly, since the rechargeable energy storage devices are not able to discharge when they are being charged \cite{26}, \cite{27}, the PEB is utilized to power the node transmitter. In contrast, the harvested energy from the ambient energy source (e.g. the predictable solar energy \cite{28}) needs to be stored in the SEB. In addition, for the sake of achieving the target that energy buffers equipped by $R1$ and $R2$ can charge and discharge simultaneously, it is assumed that the PEB may be charged instantaneously by the SEB at the end of one signal time slot. Moreover, the energy loss in the charging process between SEB and PEB and the energy loss in the discharging process between PEB and node transmitter are omitted. In \cite{27}, there is evidence to indicate that these assumptions are feasible.

In the proposed cooperative communication network, as shown in Fig. \ref{fig1}, the $S$ may broadcast unit-energy signals $x_S(i)$ to $R1$, $R2$ and $D$ at rate $R_0$  with the constant transmitting power $P_S$. The received signals $y_{SR1}(i)$, $y_{SR2}(i)$ and $y_{SD}(i)$ at $R1$, $R2$ and $D$ in the $i$-th time slot can be represented by
\begin{equation}\label{eq1}
            y_{SR1}(i) = \sqrt{P_{S}}h_{SR1}(i)x_S(i) + n_{SR1}(i),
\end{equation}
\begin{equation}
            y_{SR2}(i) = \sqrt{P_{S}}h_{SR2}(i)x_S(i) + n_{SR2}(i),
\end{equation}
\begin{equation}
            y_{SD}(i) = \sqrt{P_{S}}h_{SD}(i)x_S(i) + n_{SD}(i),
\end{equation}
where, $n_{SR1}(i)$, $n_{SR2}(i)$ and $n_{SD}(i)\sim\mathcal{CN}(0, N_0)$ denote the received additive white Gaussian noise (AWGN) at $R1$, $R2$ and $D$, respectively. Therefore, the instantaneous link signal to noise ratios (SNRs) $\gamma_{SR1}(i)$, $\gamma_{SR2}(i)$ and $\gamma_{SD}(i)$ at $R1$, $R2$ and $D$ in the $i$-th time slot would be given as follows
\begin{equation}
    \begin{split} 
            & \gamma_{SR1}(i) = \frac{P_S|h_{SR1}(i)|^2}{N_0}, \quad \gamma_{SR2}(i) = \frac{P_S|h_{SR2}(i)|^2}{N_0} \\  
            & \mbox{ and } \quad \gamma_{SD}(i) = \frac{P_S|h_{SD}(i)|^2}{N_0}.   
    \end{split}
\end{equation}

Because of the DF mode, $R1$ decodes the received signals $x_S(i)$, re-encodes the information into unit-energy symbols $x_{R1}(i)$, and then broadcasts $x_{R1}(i)$ to $R2$ and $D$ at rate $R_0$ with the constant power $P_{R1}$. Additionally, $R2$ processes the received signals in the same way as $R1$, and then broadcasts the unit-energy symbols $x_{R2}(i)$ to $D$ at rate $R_0$ with the constant power $P_{R2}$. Next, the received signals $y_{R1R2}(i)$, $y_{R1D}(i)$ and $y_{R2D}(i)$ at $R2$ and $D$ in the $i$-th time slot can be denoted by
\begin{equation}
            y_{R1R2}(i) = \sqrt{P_{R1}}h_{R1R2}(i)x_{R1}(i) + n_{R1R2}(i),
\end{equation}
\begin{equation}
            y_{R1D}(i) = \sqrt{P_{R1}}h_{R1D}(i)x_{R1}(i) + n_{R1D}(i),
\end{equation}
\begin{equation}
            y_{R2D}(i) = \sqrt{P_{R2}}h_{R2D}(i)x_{R2}(i) + n_{R2D}(i),
\end{equation}
where, $n_{R1R2}(i)$, $n_{R1D}(i)$ and $n_{R2D}(i)\sim\mathcal{CN}(0, N_0)$ denote the received AWGN at $R2$ and $D$, respectively. Similarly, the instantaneous link SNRs $\gamma_{R1R2}(i)$, $\gamma_{R1D}(i)$ and $\gamma_{R2D}(i)$ at $R2$ and $D$ can be expressed as follows
\begin{equation}
    \begin{split}
            &\gamma_{R1R2}(i) = \frac{P_{R1}|h_{R1R2}(i)|^2}{N_0}, \quad \gamma_{R1D}(i) = \frac{P_{R1}|h_{R1D}(i)|^2}{N_0} \\
            &\mbox{ and } \quad \gamma_{R2D}(i) = \frac{P_{R2}|h_{R2D}(i)|^2}{N_0}.
    \end{split}
\end{equation}

The time slot diagram of the proposed cooperative communication network is shown in Fig. \ref{fig3}, where one time slot consists of pilot broadcasting sub-slot $t1$, packet broadcasting sub-slot $t2$ and positive acknowledgement (ACK) or negative acknowledgement (NACK) broadcasting sub-slot $t3$. Specifically, in sub-slot $t1$, nodes $D$, $R1$ and $R2$ need to broadcast the pilot signals orderly so that each node of this network can know the CSI between itself and other nodes. In sub-slot $t2$, using the OR protocol described in subsection II B, at most one node is selected to broadcast the packet while the other nodes keep silent. From Fig. \ref{fig3} it can be seen that $S$, $R1$ and $R2$ are selected to broadcast the packet in time slot $T1$, time slot $T2$ and time slot $T3$ respectively. In sub-slot $t3$, nodes $D$, $R1$ and $R2$ broadcast ACK or NACK signals orderly. In particular, when the instantaneous link SNR at the receiving node is not less than the threshold $\Gamma_{th} = 2^{R_0}-1$, ACK signals would be broadcasted by the receiving node as the sign of successful packet reception. Otherwise, NACK signals would be broadcasted by the receiving node as the sign of the packet reception failure.
\begin{figure}[H]
\centerline{\includegraphics[width=3.5in]{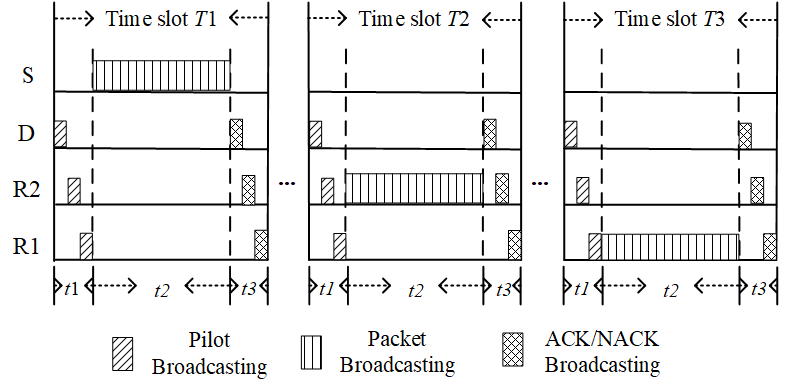}}
\caption{System time slot diagram. Where one time slot consists of pilot broadcasting sub-slot $t1$, packet broadcasting sub-slot $t2$ and positive acknowledgement (ACK) or negative acknowledgement (NACK) broadcasting sub-slot $t3$. Especially, in time slot $T1$, time slot $T2$ and time slot $T3$, $S$, $R1$ and $R2$ are selected to broadcast the packet, respectively.\label{fig3}}
\end{figure}

\subsection{OR Protocol}

\begin{table*}
\centering
\caption{ The OR Protocol. In the current time slot, the protocol determines the broadcast node (BN) $BN(i)$ and effective transmission according to the known candidate broadcast node (CBN) set $\textbf{S}(i)$ and the known conditions (energy information stored by relay nodes and channel information), and further obtains the CBN set $\textbf{S}(i+1)$ of the next time slot.}
\label{table}
\renewcommand{\arraystretch}{1.5}
\begin{tabular}{|c|c|c|c|c|}
\hline
\, CBN set $\textbf{S}(i)$ \, &\qquad Conditions \qquad &\, BN $BN(i)$ \, &\, Effective transmission \, &\, CBN set $\textbf{S}(i+1)$ \,\,\\
\hline
\multirow{5}{*}{$\textbf{s}_1=\{\mbox{S}\}$} & $C_1(i):\gamma_{SD}(i) \ge \Gamma_{th}$ & \multirow{4}{*}{S} & S$\rightarrow$D & $\textbf{s}_1=\{\mbox{S}\}$ \\
           \cline{2-2} \cline{4-5}
           & $C_2(i):\gamma_{SD}(i)<\Gamma_{th}\cap\gamma_{SR1}(i)\ge\Gamma_{th}\cap\gamma_{SR2}(i)<\Gamma_{th}$ &        & S$\rightarrow$R1 & $\textbf{s}_2=\{\mbox{S,R1}\}$ \\
           \cline{2-2} \cline{4-5}
           & $C_3(i):\gamma_{SD}(i)<\Gamma_{th}\cap\gamma_{SR1}(i)<\Gamma_{th}\cap\gamma_{SR2}(i)\ge\Gamma_{th}$ &        & S$\rightarrow$R2 & $\textbf{s}_3=\{\mbox{S,R2}\}$ \\
           \cline{2-2} \cline{4-5}
           & $C_4(i):\gamma_{SD}(i)<\Gamma_{th}\cap\gamma_{SR1}(i)\ge\Gamma_{th}\cap\gamma_{SR2}(i)\ge\Gamma_{th}$ &       & S$\rightarrow$R1,R2 & $\textbf{s}_4=\{\mbox{S,R1,R2}\}$ \\
           \cline{2-5}
           & others & $\simeq$  & $\simeq$ & $\textbf{s}_1=\{\mbox{S}\}$ \\
           \hline

\multirow{6}{*}{$\textbf{s}_2=\{\mbox{S,R1}\}$} & $C_1(i):\gamma_{SD}(i) \ge \Gamma_{th}$ & S & S$\rightarrow$D & \multirow{2}{*}{$\textbf{s}_1=\{\mbox{S}\}$}  \\
           \cline{2-4}
           & $C_5(i):\gamma_{SD}(i)<\Gamma_{th}\cap B_1(i)\ge M_1\cap\gamma_{R1D}(i)\ge\Gamma_{th}$ & R1 & R1$\rightarrow$D & \\
           \cline{2-5}
           & \begin{tabular}[c]{@{}c@{}}$C_6(i):\gamma_{SD}(i)<\Gamma_{th}\cap B_1(i)\ge M_1\cap\gamma_{R1D}(i)<\Gamma_{th}$\\ $\cap\gamma_{SR2}(i)\ge\Gamma_{th}$ \end{tabular} & \multirow{2}{*}{S} & \multirow{2}{*}{S$\rightarrow$R2} & \multirow{3}{*}{$\textbf{s}_4=\{\mbox{S,R1,R2}\}$} \\ \cline{2-2}
           & $C_7(i):\gamma_{SD}(i)<\Gamma_{th}\cap B_1(i)< M_1\cap\gamma_{SR2}(i)\ge\Gamma_{th}$ &  &  & \\
           \cline{2-4}
           & \begin{tabular}[c]{@{}c@{}}$C_8(i):\gamma_{SD}(i)<\Gamma_{th}\cap B_1(i)\ge M_1\cap\gamma_{R1D}(i)<\Gamma_{th}$\\ $\cap\gamma_{SR2}(i)<\Gamma_{th}\cap\gamma_{R1R2}(i)\ge\Gamma_{th}$ \end{tabular} & R1 & R1$\rightarrow$R2 & \\
           \cline{2-5}
           & others & $\simeq$ & $\simeq$ & $\textbf{s}_2=\{\mbox{S,R1}\}$ \\
           \hline

\multirow{3}{*}{$\textbf{s}_3=\{\mbox{S,R2}\}$} & $C_1(i):\gamma_{SD}(i) \ge \Gamma_{th}$ & S & S$\rightarrow$D & \multirow{2}{*}{$\textbf{s}_1=\{\mbox{S}\}$} \\
           \cline{2-4}
           & $C_9(i):\gamma_{SD}(i)<\Gamma_{th}\cap B_2(i)\ge M_2\cap\gamma_{R2D}(i)\ge\Gamma_{th}$ & R2 & R2$\rightarrow$D & \\
           \cline{2-5}
           & others & $\simeq$ & $\simeq$ & $\textbf{s}_3=\{\mbox{S,R2}\}$ \\
           \hline
\multirow{5}{*}{$\textbf{s}_4=\{\mbox{S,R1,R2}\}$} & $C_1(i):\gamma_{SD}(i) \ge \Gamma_{th}$ & S & S$\rightarrow$D & \multirow{4}{*}{$\textbf{s}_1=\{\mbox{S}\}$} \\
           \cline{2-4}
           & $C_9(i):\gamma_{SD}(i)<\Gamma_{th}\cap B_2(i)\ge M_2\cap\gamma_{R2D}(i)\ge\Gamma_{th}$ & R2 & R2$\rightarrow$D & \\
           \cline{2-4}
           & \begin{tabular}[c]{@{}c@{}}$C_{10}(i):\gamma_{SD}(i)<\Gamma_{th}\cap B_2(i)\ge M_2\cap\gamma_{R2D}(i)<\Gamma_{th}$\\ $\cap B_1(i)\ge M_1\cap\gamma_{R1D}(i)\ge\Gamma_{th}$\end{tabular} & \multirow{2}{*}{R1} & \multirow{2}{*}{R1$\rightarrow$D} &  \\
           \cline{2-2}
           & \begin{tabular}[c]{@{}c@{}}$C_{11}(i):\gamma_{SD}(i)<\Gamma_{th}\cap B_2(i)< M_2\cap B_1(i)\ge M_1$\\ $\cap\gamma_{R1D}(i)\ge\Gamma_{th}$\end{tabular} &  &  &  \\
           \cline{2-5}
           & others & $\simeq$ & $\simeq$ & $\textbf{s}_4=\{\mbox{S,R1,R2}\}$ \\
           \hline

\end{tabular}
\label{tab1}
\end{table*}

In this section, the OR protocol is proposed. In the cooperative communication network shown in Fig. \ref{fig1}, it is first assumed that S always has the packet to be transmitted. Secondly, when transmitting the same packet to the same node, $S$ has the highest transmission priority, $R2$ has the second priority, and $R1$ has the lowest priority. This is because $S$ has fixed power support and can transmit the packet to $R1$, $R2$ and $D$. Compared with $S$, EH-based $R1$ can transmit the packet to $R2$ and $D$, while EH-based $R2$ can only transmit the packet to $D$. Third, as long as $D$ receives the packet, $S$ would transmit a new packet in the next time slot. In addition, the transmitting node which has the packet to be forwarded is named as CBN, and the node which can receive the packet transmitted by the CBN is named the neighbouring node of the CBN. The main procedures of OR protocol are divided into the following three steps:
\begin{enumerate}[1) :]
\item Determine the CBN set $\textbf{S}(i) \in\{\textbf{s}_1, \textbf{s}_2, \textbf{s}_3, \textbf{s}_4\}$ in the current time slot, where $\textbf{s}_1 = \{S\}$, $\textbf{s}_2 = \{S,R1\}$, $\textbf{s}_3 = \{S,R2\}$, $\textbf{s}_4 = \{S,R1,R2\}$.
\item According to the channel informations between nodes, the stored energy $B_1(i)$ of $R1$ and the stored energy $B_2(i)$ of $R2$, in the current time slot, the broadcast node (BN) $BN(i)\in\{S, R1, R2, \simeq\}$ is selected from the CBN set $\textbf{S}(i)$, where $\simeq$ indicates that there is no node is selected to broadcast the packet in the current time slot.
\item Determine the effective transmission and get CBN set $\textbf{S}(i+1) \in\{\textbf{s}_1, \textbf{s}_2, \textbf{s}_3, \textbf{s}_4\}$ in the next time slot.
\end{enumerate}

More especially, Table \ref{tab1} describes the OR protocol in detail, where $M_1$ and $M_2$ denote the energy consumed by $R1$ and $R2$ to broadcast the packet in one time slot, respectively. Moreover, $M_1$ and $M_2$ are numerically equal to $P_{R1}$ and $P_{R2}$, respectively.

\section{Limiting Distribution of Energy}
This section is concerned with the limiting distributions of stored energy in the infinite-size energy buffers with HSU architecture. Specially, it is assumed that $X_1(i)$ and $X_2(i)$ are the energy harvested by $R1$ and $R2$ from the surrounding environment in one time slot, respectively. Moreover, $X_1(i)$ and $X_2(i)$ are exponentially distributed random variables (as in both \cite{26} and \cite{27}) with means $\mathbb{E}[X_1(i)]=1/\lambda_1$ and $\mathbb{E}[X_2(i)]=1/\lambda_2$, respectively. So that the probability density function (PDF) of $X_1(i)$ and $X_2(i)$ can be expressed as follows
\begin{equation}
            f_{X_j}(x) = \lambda_j e^{-\lambda_j x}, \quad x>0, j\in{1,2}.
\end{equation}

\subsection{Limiting Distribution of $R2$ Energy}
Using the DCSMC on the continuous state space $[0,\infty)$, the dynamic process $B_2(i)$ of the energy storage in the infinite-size energy buffer of $R2$ can be given as follows
\begin{equation}\label{eq10}
    \begin{split}
             & B_2(i+1) = B_2(i)+X_2(i), \quad \mathbb{P}_{21}\\
             & B_2(i+1) = B_2(i)-M_2+X_2(i), \quad \mathbb{P}_{22}
    \end{split}
\end{equation}
where $\mathbb{P}_{21}$ and $\mathbb{P}_{22}$ are the conditions required for the storage equations in Eq. (\ref{eq10}) to hold according to the OR protocol shown in Table \ref{tab1}. And $\mathbb{P}_{21}$ and $\mathbb{P}_{22}$ can be expressed as follows
\begin{equation}\label{eq11}
    \begin{split}
         \mathbb{P}_{21}:\, & \Big\langle\big( \textbf{S}(i) = \textbf{s}_3 \big) \cup \big(\textbf{S}(i) = \textbf{s}_4 \big)\Big\rangle\cap \bigg[ C_1(i) \cup \Big\langle \overline{C_1}(i)\\
         & \cap\big(B_2(i)\ge M_2\big)\cap\big(\gamma_{R2D}(i)<\Gamma_{th}\big)\Big\rangle \cup \Big\langle\overline{C_1}(i)\\
         & \cap\big( B_2(i)< M_2\big)\Big\rangle\bigg]\cup\big( \textbf{S}(i) = \textbf{s}_1 \big) \cup \big(\textbf{S}(i) = \textbf{s}_2 \big),\\
    \end{split}
\end{equation}
\begin{equation}\label{eq12}
    \begin{split}
             \mathbb{P}_{22}:\, \Big\langle\big( \textbf{S}(i) = \textbf{s}_3 \big) \cup \big(\textbf{S}(i) = \textbf{s}_4 \big)\Big\rangle\cap C_9(i).
    \end{split}
\end{equation}
where $C_1(i)$ and $C_9(i)$ are given in Table \ref{tab1}.

Let $p_1$, $p_2$, $p_3$ and $p_4$ denote the probabilities $\mathrm{Pr}\{\textbf{S}(i) = \textbf{s}_1\}$, $\mathrm{Pr}\{\textbf{S}(i) = \textbf{s}_2\}$, $\mathrm{Pr}\{\textbf{S}(i) = \textbf{s}_3\}$ and $\mathrm{Pr}\{\textbf{S}(i) = \textbf{s}_4\}$ in the case of stable energy buffers, respectively, and the values of these probabilities can be obtained in Alg. \ref{Alg.1} which is introduced in subsection III C. Then, define $\Omega_{SD}=\frac{d_{SD}^{\alpha}N_0}{P_S}$, $\Omega_{SR2}=\frac{d_{SR2}^{\alpha}N_0}{P_S}$, $\Omega_{SR1}=\frac{d_{SR1}^{\alpha}N_0}{P_S}$, $\Omega_{R1D}=\frac{d_{R1D}^{\alpha}N_0}{M_1}$, $\Omega_{R1R2}=\frac{d_{R1R2}^{\alpha}N_0}{M_1}$ and $\Omega_{R2D}=\frac{d_{R2D}^{\alpha}N_0}{M_2}$. Furthermore, let $\psi_2$ denote the energy buffer stabilization parameter for the energy storage process $B_2(i)$ in Eq. (\ref{eq10}), which can be expressed as follows
\begin{equation}\label{eq13}
    \begin{split}
            \psi_2 &=\lambda_2 M_2 \left[\left(p_3+p_4\right)\left(1-e^{-\Omega_{SD}\Gamma_{th}}\right)e^{-\Omega_{R2D}\Gamma_{th}}\right]\\
            & = \frac{M_2 \left[\left(p_3+p_4\right)\left(1-e^{-\Omega_{SD}\Gamma_{th}}\right)e^{-\Omega_{R2D}\Gamma_{th}}\right]}{\mathbb{E}[X_2(i)]}.
    \end{split}
\end{equation}

\begin{theorem}\label{theo1}
If $\psi_2 \leq 1$, the energy storage process $B_2(i)$ in Eq. (\ref{eq10}) will not have a stationary distribution. Moreover, after a finite number of time slots, $B_2(i) > M_2$ almost always hold.
\end{theorem}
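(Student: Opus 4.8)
The plan is to analyze the stochastic recursion in Eq.~(\ref{eq10}) as a random walk on $[0,\infty)$ with an upward drift, and to show that when $\psi_2 \le 1$ this drift forces $B_2(i)$ to escape to $+\infty$, precluding any stationary law. First I would compute the expected one-step increment of $B_2(i)$ conditioned on $B_2(i) \ge M_2$ (so that the decrementing branch $\mathbb{P}_{22}$ is feasible). In that regime the increment is $X_2(i)$ with probability $1-q$ and $X_2(i)-M_2$ with probability $q$, where $q$ is the probability (over the channel fading and the CBN-set distribution given by $p_1,\dots,p_4$) that condition $\mathbb{P}_{22}$ in Eq.~(\ref{eq12}) holds, i.e.\ that the current CBN set is $\textbf{s}_3$ or $\textbf{s}_4$ and $C_9(i)$ is satisfied. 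Since $C_9(i)$ requires $\gamma_{SD}(i)<\Gamma_{th}$, $B_2(i)\ge M_2$, and $\gamma_{R2D}(i)\ge\Gamma_{th}$, and the link SNRs are exponential with the parameters encoded in the $\Omega$'s, one gets $q = (p_3+p_4)\bigl(1-e^{-\Omega_{SD}\Gamma_{th}}\bigr)e^{-\Omega_{R2D}\Gamma_{th}}$ on the event $\{B_2(i)\ge M_2\}$. Hence the conditional drift equals $\mathbb{E}[X_2(i)] - q M_2 = \mathbb{E}[X_2(i)](1-\psi_2) \ge 0$ precisely when $\psi_2 \le 1$, using the definition of $\psi_2$ in Eq.~(\ref{eq13}).

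Next I would turn this nonnegative drift into a transience/no-stationary-distribution statement. The cleanest route: observe that $B_2(i)$ stochastically dominates the simpler process $\tilde B(i)$ that always subtracts $M_2$ with probability $q$ and adds $X_2(i)$ each step (ignoring the states where $B_2<M_2$, which only add energy and thus only help). The dominating-from-below process $\tilde B(i)$ is a genuine random walk with i.i.d.\ increments of mean $\mathbb{E}[X_2(i)](1-\psi_2)\ge 0$; by the strong law of large numbers $\tilde B(i)/i \to \mathbb{E}[X_2(i)](1-\psi_2)$ a.s. If $\psi_2<1$ this limit is strictly positive, so $\tilde B(i)\to\infty$ and hence $B_2(i)\to\infty$ a.s.; in particular no stationary distribution can exist and $B_2(i)>M_2$ eventually. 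If $\psi_2=1$ the increments have mean zero but are nondegenerate (the additive part $X_2(i)$ is exponential, hence has positive variance and is not a.s.\ zero), so the random walk $\tilde B(i)$ oscillates with $\limsup \tilde B(i) = +\infty$ a.s.\ (Chung–Fuchs), and again a proper stationary distribution on $[0,\infty)$ is impossible. For the ``after a finite number of time slots $B_2(i)>M_2$ almost always'' clause, I would note that once $B_2$ exceeds $M_2$ it can only drop below $M_2$ by... it actually cannot, since every transition either adds $X_2(i)\ge 0$ (branch $\mathbb{P}_{21}$) or subtracts $M_2$ and adds $X_2(i)$ starting from a state $\ge M_2$ (branch $\mathbb{P}_{22}$), landing at $X_2(i)\ge 0$; so the set $\{B_2\ge M_2\}$ is not absorbing, but combined with the drift-to-infinity one gets that the last visit to $[0,M_2)$ occurs at a finite (a.s.) time.

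The main obstacle I anticipate is justifying the stochastic-domination step rigorously, because the conditioning that produces the probability $q$ is only valid on $\{B_2(i)\ge M_2\}$: when $B_2(i)<M_2$ the protocol forces the pure-addition branch, so the true increment distribution is state-dependent. The fix is to argue that replacing the true dynamics on $\{B_2<M_2\}$ by the same "subtract $M_2$ with prob.\ $q$" rule only decreases the process (you are removing energy you otherwise would have kept), so the modified chain $\tilde B$ with i.i.d.\ increments is a valid lower bound and its divergence implies the original's. A secondary subtlety is confirming that $p_3+p_4$, and hence $q$, is itself well-defined in the "stable" sense assumed in the theorem statement — but that is supplied by Alg.~\ref{Alg.1} as referenced, and for the non-existence conclusion one may take the $p_j$ as given constants. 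Apart from that, the remaining steps are routine: the SNR outage probabilities are standard exponential CDF evaluations, and the SLLN/Chung–Fuchs dichotomy is classical.
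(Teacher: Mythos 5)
Your core argument is the same as the paper's: both hinge on comparing the average harvest rate $\mathbb{E}[X_2(i)]$ against the average consumption rate, which is bounded above by $M_2 b_2$ with $b_2=(p_3+p_4)\left(1-e^{-\Omega_{SD}\Gamma_{th}}\right)e^{-\Omega_{R2D}\Gamma_{th}}$ precisely because the decrement branch $\mathbb{P}_{22}$ additionally requires $B_2(i)\ge M_2$; this is the quantity $q$ you compute and exactly the bound in Eq. (\ref{APPENDIX A8}). The paper writes $B_2(i+1)-B_2(i)=X_2(i)-M_2 O_R(i)$ and applies the law of large numbers to the time averages directly, concluding that for $\psi_2\le 1$ the increments have nonnegative mean so energy accumulates without bound; your stochastic-domination construction (an i.i.d.\ walk $\tilde B$ that subtracts $M_2$ with probability $q$ regardless of state, hence lower-bounds $B_2$ under the natural coupling) combined with the SLLN is a cleaner and more rigorous route to the same conclusion when $\psi_2<1$, and your caveat about the state-dependence of the decrement probability is exactly the point the paper's version glosses over.

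The one genuine gap is the critical case $\psi_2=1$. Chung--Fuchs gives $\limsup_i \tilde B(i)=+\infty$ a.s.\ for the mean-zero lower-bounding walk, hence $\limsup_i B_2(i)=+\infty$; but that alone does not rule out a stationary distribution for $B_2$ --- a reflected random walk with negative drift also has $\limsup=+\infty$ while being positive recurrent. Domination only transfers lower bounds, so the instability of $\tilde B$ does not by itself imply instability of $B_2$ at criticality. To close this you need to use that the true decrement probability is \emph{strictly} smaller than $q$ whenever $\mathrm{Pr}\{B_2(i)<M_2\}>0$: under any putative stationary law, either $\mathrm{Pr}\{B_2<M_2\}>0$, so the stationary mean drift is strictly positive (contradicting stationarity), or $\mathrm{Pr}\{B_2<M_2\}=0$, in which case the process is an unreflected mean-zero nondegenerate random walk, which admits no stationary law. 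The paper instead disposes of this case by invoking Loynes' instability result for critically loaded queues (Eq. (\ref{APPENDIX A13}) and reference [29]), which is the standard tool here. Relatedly, your ``last visit to $[0,M_2)$ is a.s.\ finite'' argument relies on drift to infinity and therefore also does not cover $\psi_2=1$ as written.
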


\begin{proof}
The proof is presented in Appendix A.
\end{proof}

\begin{theorem}\label{theo2}
For the storage process $B_2(i)$ in Eq. (\ref{eq10}), the limiting energy distribution exists with $\psi_2 > 1$. Furthermore, the limiting PDF of the energy buffer state at $R2$ can be given by
\begin{equation}\label{eq14}
            g_2(x) = \begin{cases}
                          \dfrac{1}{M_2}\left(1-e^{Q_2 x}\right), & 0 \leq x < M_2 \\
                          \dfrac{1}{M_2}\dfrac{-Q_2 e^{Q_2 x}}{\left(b_2\lambda_2+Q_2\right)}, & x \ge M_2
                        \end{cases}
\end{equation}
where,
\begin{equation}\label{eq1400}
    \begin{split}
            b_2 = \left[\left(p_3+p_4\right)\left(1-e^{-\Omega_{SD}\Gamma_{th}}\right)e^{-\Omega_{R2D}\Gamma_{th}}\right],
    \end{split}
\end{equation}
and $Q_2 = \frac{-W\left(-b_2\lambda_2 M_2 e^{-b_2\lambda_2 M_2}\right)}{M_2}-b_2\lambda_2 < 0$, satisfying $b_2 \lambda_2 e^{Q_2 M_2} = b_2 \lambda_2 + Q_2$. Furthermore, the probabilities $\mathrm{Pr}\{B_2(i)\ge M_R\}$ and $\mathrm{Pr}\{B_2(i)< M_R\}$ may be written as follows
\begin{equation}\label{eq141}
    \begin{split}
             \mathrm{Pr}\{B_2(i) \ge M_2\} & =  \int_{x=M_2}^{\infty} \dfrac{-Q_2 e^{Q_2 x}}{M_2\left(b_2\lambda_2 + Q_2\right)}\, dx \\
             & = \dfrac{e^{Q_2 M_2}}{M_2\left(b_2\lambda_2 + Q_2\right)} =\dfrac{1}{M_2 b_2 \lambda_2},
    \end{split}
\end{equation}
\begin{equation}\label{eq142}
    \begin{split}
             \mathrm{Pr}\{B_2(i) < M_2\}  = 1-\dfrac{1}{M_2 b_2 \lambda_2}.
    \end{split}
\end{equation}

\end{theorem}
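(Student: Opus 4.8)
The plan is to treat the energy-storage recursion \eqref{eq10} as a discrete-time continuous-state Markov chain and to find its limiting density $g_2$ by solving the stationarity (integral) equation directly, then to verify that the candidate density in \eqref{eq14} is indeed a probability density and a fixed point of the transition operator. First I would identify the transition kernel: conditioned on $B_2(i)=y$, the next state is $y+X_2$ with probability $1-b_2$ (when $\mathbb{P}_{21}$ holds and no energy is spent) and $y-M_2+X_2$ with probability $b_2$ (event $\mathbb{P}_{22}$, which by \eqref{eq12} together with Table~\ref{tab1} and the SNR statistics has probability $b_2=(p_3+p_4)(1-e^{-\Omega_{SD}\Gamma_{th}})e^{-\Omega_{R2D}\Gamma_{th}}$ whenever $y\ge M_2$, and $0$ when $y<M_2$). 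Here I use that $\Pr\{\gamma_{SD}<\Gamma_{th}\}=1-e^{-\Omega_{SD}\Gamma_{th}}$ and $\Pr\{\gamma_{R2D}\ge\Gamma_{th}\}=e^{-\Omega_{R2D}\Gamma_{th}}$, which follow from the exponential distribution of $|h|^2$ and the definitions of the $\Omega$'s. The key point is that the decrement event is available only above the threshold $M_2$, so the balance equation splits at $x=M_2$, which is exactly why \eqref{eq14} is piecewise.

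Next I would write the stationarity equation $g_2(x)=\int_0^\infty g_2(y)\,k(y,x)\,dy$ with the kernel above and the exponential harvesting density $f_{X_2}(x)=\lambda_2 e^{-\lambda_2 x}$. Because the harvested energy is exponential, the standard trick is to differentiate the integral equation with respect to $x$ to convert it into a linear ODE with constant coefficients on each of the two regions $[0,M_2)$ and $[M_2,\infty)$. On $[0,M_2)$ only the ``no-spend'' branch and part of the ``spend'' branch contribute, and the resulting equation forces a solution of the form $c_1(1-e^{Q_2 x})$; on $[M_2,\infty)$ one gets a pure exponential $c_2 e^{Q_2 x}$. Matching at $x=M_2$ (continuity of $g_2$, which the memoryless property of the exponential guarantees), imposing normalization $\int_0^\infty g_2=1$, and feeding the consistency condition back into the integral equation pins down the exponent $Q_2$ as the root of the transcendental equation $b_2\lambda_2 e^{Q_2 M_2}=b_2\lambda_2+Q_2$. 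Solving this for $Q_2$ in closed form is exactly where the Lambert $W$ function enters: rearranging gives $(-Q_2-b_2\lambda_2)e^{-(Q_2+b_2\lambda_2)M_2}\cdot(\text{const})=-b_2\lambda_2 M_2 e^{-b_2\lambda_2 M_2}$, whence $Q_2=-W(-b_2\lambda_2 M_2 e^{-b_2\lambda_2 M_2})/M_2-b_2\lambda_2$. The sign statement $Q_2<0$ I would get from the condition $\psi_2>1$ (equivalently $b_2\lambda_2 M_2<1$), which places the argument of $W$ in the interval $(-e^{-1},0)$ on the principal branch and makes $-W(\cdot)/M_2<b_2\lambda_2$.

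Having the density, the tail probability \eqref{eq141} is a one-line integral: $\int_{M_2}^\infty \frac{-Q_2 e^{Q_2 x}}{M_2(b_2\lambda_2+Q_2)}dx=\frac{e^{Q_2 M_2}}{M_2(b_2\lambda_2+Q_2)}$, and then substituting the defining relation $b_2\lambda_2 e^{Q_2 M_2}=b_2\lambda_2+Q_2$ collapses this to $\frac{1}{M_2 b_2\lambda_2}$; \eqref{eq142} follows by complementation. I would also double-check normalization here as a sanity step, since $\int_0^{M_2}\frac1{M_2}(1-e^{Q_2 x})dx+\frac{1}{M_2 b_2\lambda_2}$ must equal $1$, and this is equivalent to the same transcendental relation, which is reassuring.

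The main obstacle I expect is not the algebra but justifying existence and uniqueness of the limiting distribution rigorously — i.e., that the chain is positive Harris recurrent with a unique invariant density — and that the ODE-plus-matching procedure actually recovers that invariant density rather than a spurious solution. I would handle existence by invoking the drift/Foster–Lyapunov argument already used for Theorem~\ref{theo1} (the condition $\psi_2>1$ is the negative-drift condition $\mathbb{E}[\Delta B_2\mid B_2\ge M_2]=\mathbb{E}[X_2]-b_2 M_2<0$, i.e. $1/\lambda_2<b_2 M_2$, i.e. $\psi_2>1$), and uniqueness from the fact that the chain, restricted appropriately, is $\varphi$-irreducible and aperiodic because the exponential increment has full support on $(0,\infty)$; with that in hand the verified fixed point of the transition operator must be the limiting density. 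A secondary technical point is checking that the piecewise density is genuinely nonnegative on $[0,M_2)$, which again reduces to $Q_2<0$, already established.
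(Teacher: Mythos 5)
Your route is essentially the paper's own (Appendix B): set up the stationarity equation for the two-branch kernel in which the decrement of size $M_2$ occurs with probability $b_2$ and only when the buffer level is at least $M_2$, reduce it on each of $[0,M_2)$ and $[M_2,\infty)$ using the exponential increment density, posit $g_{22}(x)=k_2e^{Q_2x}$ above $M_2$, solve a Volterra equation of the second kind below $M_2$, pin down $Q_2$ from the consistency condition $b_2\lambda_2e^{Q_2M_2}=b_2\lambda_2+Q_2$ via the Lambert $W$ function, and normalize; the tail-probability computation in Eq.~(\ref{eq141}) and the complementation in Eq.~(\ref{eq142}) are identical. Two minor methodological differences: you propose continuity of $g_2$ at $x=M_2$ as a matching condition (in the paper continuity is a consequence, not an input, the matching being supplied by the two consistency identities extracted from the integral equation), and you propose a Harris-recurrence/drift argument for existence and uniqueness where the paper simply cites \cite[Appendix B]{34}; both substitutions are defensible.

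There is, however, one concrete error at precisely the point where the hypothesis $\psi_2>1$ is used. By Eq.~(\ref{eq13}), $\psi_2=\lambda_2M_2b_2$, so $\psi_2>1$ is equivalent to $b_2\lambda_2M_2>1$, \emph{not} to $b_2\lambda_2M_2<1$ as your parenthetical asserts. This is not cosmetic: writing $u=b_2\lambda_2M_2$, the argument $-ue^{-u}$ of $W$ lies in $[-e^{-1},0)$ for every $u>0$, so locating it there decides nothing; the relevant dichotomy is that for $u\le 1$ the principal branch gives $W(-ue^{-u})=-u$ and hence $Q_2=0$ (the degenerate, non-normalizable case), while for $u>1$ it gives $W(-ue^{-u})>-u$ and hence $Q_2<0$, which is what makes $g_{22}$ integrable. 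Under your stated inequality the construction collapses. Your own drift computation in the same sentence ($1/\lambda_2<b_2M_2$, i.e.\ $\psi_2>1$) points in the correct direction and contradicts the parenthetical, so this reads as a sign slip rather than a structural misunderstanding; nevertheless, as written, the step establishing $Q_2<0$ would fail and needs to be corrected to: $\psi_2>1\iff b_2\lambda_2M_2>1\implies W(-b_2\lambda_2M_2e^{-b_2\lambda_2M_2})>-b_2\lambda_2M_2\implies Q_2<0$.
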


\begin{proof}
The proof is presented in Appendix B.
\end{proof}

\subsection{Limiting Distribution of $R1$ Energy}
Similarly, Let $B_1(i)$ denotes the dynamic process of the energy storage in the infinite-size energy buffer of relay $R1$ using the DCSMC on a continuous state space $[0,\infty)$, which can be given by the storage equation as follows
\begin{equation}\label{eq15}
    \begin{split}
         & B_1(i+1) = B_1(i)+X_1(i), \quad \mathbb{P}_{11}\\
         & B_1(i+1) = B_1(i)-M_1+X_1(i), \quad \mathbb{P}_{12}
    \end{split}
\end{equation}
Where $\mathbb{P}_{11}$ and $\mathbb{P}_{12}$ are the conditions required for above OR protocol based storage equations to hold, and they are shown in Eq. (\ref{eq16}) and Eq. (\ref{eq17}), respectively.
\begin{figure*} 
	\centering
    \begin{equation}\label{eq16}
        \begin{split}
             \mathbb{P}_{11}:\,&\big(\textbf{S}(i)=\textbf{s}_1\big)\cup\big(\textbf{S}(i)=\textbf{s}_3\big)\cup\big( \textbf{S}(i) = \textbf{s}_2 \big)\cap \Bigg\{ C_1(i)\cup\Big\langle\overline{C_1}(i)\cap\big(B_1(i)\ge M_1\big)\cap\big(\gamma_{R1D}(i)<\Gamma_{th}\big)\Big\rangle\\
             &\cap\bigg[\Big\langle\big(\gamma_{SR2}(i)<\Gamma_{th}\big)\cap\big(\gamma_{R1R2}(i)<\Gamma_{th}\big)\Big\rangle\cup
             \big(\gamma_{SR2}(i)\ge\Gamma_{th}\big)\bigg] \cup\Big\langle\overline{C_1}(i)\cap\big( B_1(i)< M_1\big)\Big\rangle\Bigg\}\\
             &\cup\big(\textbf{S}(i)=\textbf{s}_4\big)\cap \Bigg\{ C_1(i) \cup C_9(i)\cup\overline{C_1}(i)\cap
             \bigg[\Big\langle\big(B_2(i)\ge M_2\big)\cap\big(\gamma_{R2D}(i)<\Gamma_{th}\big)\Big\rangle\cup\big(B_2(i)< M_2\big)\bigg]\\
             &\cap \bigg[\Big\langle\big(B_1(i)\ge M_1\big)\cap\big(\gamma_{R1D}(i)<\Gamma_{th}\big)\Big\rangle\cup\big(B_1(i)< M_1\big)\bigg]\Bigg\}
        \end{split}
    \end{equation}
\end{figure*}
\begin{equation}\label{eq17}
    \begin{split}
        \mathbb{P}_{12}:\,&\bigg[\big(\textbf{S}(i)=\textbf{s}_2\big)\cap\Big\langle C_5(i)\cup C_8(i)\Big\rangle\bigg]\\
        &\cup\bigg[\big(\textbf{S}(i)=\textbf{s}_4\big)\cap\Big\langle C_{10}(i)\cup C_{11}(i)\Big\rangle\bigg].
    \end{split}
\end{equation}
where $C_1(i)$, $C_5(i)$, $C_8(i)$, $C_9(i)$, $C_{10}(i)$ and $C_{11}(i)$ are given in Table \ref{tab1}.

The energy buffer stabilization parameter $\psi_2$ for the energy storage process $B_2(i)$ of Eq. (\ref{eq15}) can be expressed as follows
\begin{equation}\label{eq18}
    \begin{split}
        \psi_1 &=\lambda_1 M_1\left(1-e^{-\Omega_{SD}\Gamma_{th}}\right)\bigg[p_2e^{-\Omega_{R1D}\Gamma_{th}}\\
         & + p_2\left(1-e^{-\Omega_{R1D}\Gamma_{th}}\right)\left(1-e^{-\Omega_{SR2}\Gamma_{th}}\right)e^{-\Omega_{R1R2}\Gamma_{th}} \\
         & + p_4e^{-\Omega_{R1D}\Gamma_{th}}\left(1-\frac{e^{-\Omega_{R2D}\Gamma_{th}}}{M_2 b_2 \lambda_2}\right) \bigg].
    \end{split}
\end{equation}

\begin{theorem}\label{theo3}
Similarly, for the energy storage process $B_1(i)$ in Eq. (\ref{eq15}),  the stationary distribution does not exist when $\psi_1 \leq 1$, and there is always $M_1$ amount of energy available in the buffer.
\end{theorem}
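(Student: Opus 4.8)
The plan is to follow the same strategy as the proof of Theorem~\ref{theo1} (given in Appendix~A), since $B_1(i)$ in Eq.~(\ref{eq15}) is again a Lindley-type random walk on $[0,\infty)$: every slot it grows by the i.i.d.\ harvest $X_1(i)$ and it drops by the fixed amount $M_1$ exactly on the event $\mathbb{P}_{12}$ of Eq.~(\ref{eq17}). Writing $N(i)$ for the number of slots among $\{0,\dots,i-1\}$ in which $R1$ actually broadcasts, one has $B_1(i)=B_1(0)+\sum_{k=0}^{i-1}X_1(k)-M_1 N(i)$, so the whole statement reduces to comparing the long-run broadcast rate $N(i)/i$ with the harvesting rate $\mathbb{E}[X_1(i)]=1/\lambda_1$.

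The first step is to identify the \emph{maximal} per-slot discharge probability $b_1$. Conditioning on $B_1(i)\ge M_1$ (which every clause of $\mathbb{P}_{12}$ contains, so no discharge can occur while $B_1(i)<M_1$), the channel gains are independent of the buffer contents with $\mathrm{Pr}\{\gamma_{nm}(i)\ge\Gamma_{th}\}=e^{-\Omega_{nm}\Gamma_{th}}$; enumerating the events $C_5,C_8$ (under $\textbf{S}(i)=\textbf{s}_2$) and $C_{10},C_{11}$ (under $\textbf{S}(i)=\textbf{s}_4$), substituting the stationary CBN-set probabilities $p_2,p_4$ from Alg.~\ref{Alg.1} and the steady-state value $\mathrm{Pr}\{B_2(i)\ge M_2\}=\tfrac{1}{M_2 b_2\lambda_2}$ from Theorem~\ref{theo2}, one recovers exactly $b_1:=\psi_1/(\lambda_1 M_1)$, i.e.\ the bracketed factor in Eq.~(\ref{eq18}); the two $\textbf{s}_4$ terms merge because $C_{10}$ presupposes $B_2(i)\ge M_2$ and $C_{11}$ presupposes $B_2(i)<M_2$, and $\mathrm{Pr}\{B_2\ge M_2\}(1-e^{-\Omega_{R2D}\Gamma_{th}})+\mathrm{Pr}\{B_2<M_2\}=1-\tfrac{e^{-\Omega_{R2D}\Gamma_{th}}}{M_2 b_2\lambda_2}$.

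Next I would convert this into a drift estimate: for $x\ge M_1$, $\mathbb{E}[B_1(i+1)-B_1(i)\mid B_1(i)=x]=\mathbb{E}[X_1(i)]-M_1 b_1=(1-\psi_1)/\lambda_1$, while for $x<M_1$ the increment equals $\mathbb{E}[X_1(i)]>0$; hence when $\psi_1\le 1$ the one-step drift is nonnegative outside the compact set $[0,M_1]$ and strictly positive there when $\psi_1<1$. For $\psi_1<1$ this is a transience (Foster--Lyapunov) condition: the conditional probability of a discharge in any slot is at most $b_1$, so a strong-law-of-large-numbers argument gives $\limsup_i N(i)/i\le b_1$ a.s., whence $\liminf_i B_1(i)/i\ge 1/\lambda_1-M_1 b_1=(1-\psi_1)/\lambda_1>0$ and therefore $B_1(i)\to\infty$ a.s.; a chain that escapes to infinity has no invariant probability measure, and $B_1(i)>M_1$ for all sufficiently large $i$, so $M_1$ units of energy are always available. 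For the borderline case $\psi_1=1$ the drift off $[0,M_1]$ is exactly zero, so $B_1(i)$ behaves like a zero-mean random walk reflected near the origin, which is null recurrent: its only candidate invariant measure has infinite total mass, so no stationary distribution exists --- this is handled by the same comparison/recurrence argument as in Appendix~A.

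The main obstacle is the coupling. The discharge event for $B_1$ depends not only on the i.i.d.\ channels but also on $\textbf{S}(i)$ and on $B_2(i)$, neither of which is independent of $B_1(i)$, so the identification of the per-slot discharge rate with $b_1$ is only exact once $(\textbf{S}(i),B_2(i))$ have entered the stationary regime guaranteed by Alg.~\ref{Alg.1} together with Theorems~\ref{theo1}--\ref{theo2}, and one must also check that conditioning on $\{B_1(i)\ge M_1\}$ leaves the channel statistics unchanged. Making the passage from ``asymptotic discharge rate $\le b_1$'' to ``$B_1(i)\to\infty$'' fully rigorous, and disposing of the knife-edge case $\psi_1=1$, are the two places that require care; the rest is bookkeeping over the branches of Table~\ref{tab1}.
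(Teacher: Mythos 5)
Your proposal is correct and takes essentially the same route as the paper: the paper's proof of Theorem~3 simply says to repeat Appendix~A with $\mathbb{P}_{21},\mathbb{P}_{22}$ replaced by $\mathbb{P}_{11},\mathbb{P}_{12}$, i.e.\ the same indicator-variable/law-of-large-numbers comparison of the harvesting rate $1/\lambda_1$ against the upper bound $M_1 b_1$ on the average discharge rate, with the knife-edge case $\psi_1=1$ handled by the Loynes-type instability argument. Your write-up is in fact more explicit than the paper's about the enumeration of $C_5,C_8,C_{10},C_{11}$ yielding $b_1=\psi_1/(\lambda_1 M_1)$ and about the coupling of the discharge event with $\textbf{S}(i)$ and $B_2(i)$, which the paper glosses over.
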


\begin{proof}
The proof of theorem 3 only needs to replace $\mathbb{P}_{21}$ and $\mathbb{P}_{22}$ in Eq. (\ref{APPENDIX A1}) with $\mathbb{P}_{11}$ and $\mathbb{P}_{12}$, respectively, and the rest is the same as the proof process of theorem 1.
\end{proof}

\begin{theorem}\label{theo4}
When, $\psi_1 > 1$, the limiting distribution of $B_1(i)$ exists and the limiting PDF $g_1(x)$ of $B_1(i)$ can be given by
\begin{equation}\label{eq21}
    g_1(x) = \begin{cases}
                  \dfrac{1}{M_1}\left(1-e^{Q_1 x}\right), & 0 \leq x < M_1 \\
                  \dfrac{1}{M_1}\dfrac{-Q_1 e^{Q_1 x}}{\left(b_1 \lambda_1+Q_1 \right)}, & x \ge M_1.
                \end{cases}
\end{equation}
where, $b_1$ is given as follows
\begin{equation}\label{eq211}
    \begin{split}
        b_1 &=\left(1-e^{-\Omega_{SD}\Gamma_{th}}\right)\bigg[p_2e^{-\Omega_{R1D}\Gamma_{th}}\\
         & + p_2\left(1-e^{-\Omega_{R1D}\Gamma_{th}}\right)\left(1-e^{-\Omega_{SR2}\Gamma_{th}}\right)e^{-\Omega_{R1R2}\Gamma_{th}} \\
         & + p_4e^{-\Omega_{R1D}\Gamma_{th}}\left(1-\frac{e^{-\Omega_{R2D}\Gamma_{th}}}{M_2 b_2 \lambda_2}\right) \bigg],
    \end{split}
\end{equation}
and $Q_1 = \frac{-W\left(-b_1\lambda_1 M_1 e^{-b_1\lambda_1 M_1}\right)}{M_1}-b_1\lambda_1 < 0$, satisfying $b_1 \lambda_1 e^{Q_1 M_1} = b_1 \lambda_1 + Q_1$. In addition, the probabilities $\mathrm{Pr}\{B_1(i)\ge M_1\}$ and $\mathrm{Pr}\{B_1(i)< M_1\}$ may be written as follows
\begin{equation}\label{eq212}
    \begin{split}
             \mathrm{Pr}\{B_1(i) \ge M_1\} & =  \int_{x=M_1}^{\infty} \dfrac{-Q_1 e^{Q_1 x}}{M_1\left(b_1\lambda_1 + Q_1\right)}\, dx \\
             & = \dfrac{e^{Q_1 M_1}}{M_1\left(b_1\lambda_1 + Q_1\right)} =\dfrac{1}{M_1 b_1 \lambda_1},
    \end{split}
\end{equation}
\begin{equation}\label{eq213}
    \begin{split}
             \mathrm{Pr}\{B_1(i) < M_1\}  = 1-\dfrac{1}{M_1 b_1 \lambda_1}.
    \end{split}
\end{equation}

\end{theorem}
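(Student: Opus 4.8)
The plan is to mirror the proof of Theorem~\ref{theo2} (Appendix B), since the recursion~(\ref{eq15}) for $B_1(i)$ has the same structure as~(\ref{eq10}): a random walk with i.i.d.\ exponential increments $X_1(i)$ and a downward jump of size $M_1$ that is possible only when $B_1(i)\ge M_1$. The first step is to compress the event descriptions $\mathbb{P}_{11},\mathbb{P}_{12}$ of~(\ref{eq16})--(\ref{eq17}) into a single scalar $b_1$: the probability that, \emph{conditioned on} $B_1(i)\ge M_1$, relay $R1$ actually broadcasts (and hence consumes $M_1$) in slot $i$. Writing $\mathbb{P}_{12}$ as the disjoint union of $[\textbf{S}(i)=\textbf{s}_2]\cap(C_5(i)\cup C_8(i))$ and $[\textbf{S}(i)=\textbf{s}_4]\cap(C_{10}(i)\cup C_{11}(i))$, using the independence of the Rayleigh fades across links so that $\mathrm{Pr}\{\gamma_{nm}(i)\ge\Gamma_{th}\}=e^{-\Omega_{nm}\Gamma_{th}}$, the stationary CBN-set probabilities $p_2,p_4$ from Alg.~\ref{Alg.1}, and — for the $C_{10}\cup C_{11}$ term, whose $B_2$-dependence collapses to $\mathrm{Pr}\{B_2(i)\ge M_2\}(1-e^{-\Omega_{R2D}\Gamma_{th}})+\mathrm{Pr}\{B_2(i)<M_2\}=1-\frac{e^{-\Omega_{R2D}\Gamma_{th}}}{M_2 b_2\lambda_2}$ by Theorem~\ref{theo2} — one recovers exactly $b_1$ as in~(\ref{eq211}), hence $\psi_1=\lambda_1 M_1 b_1$ as in~(\ref{eq18}).

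With this reduction, $\{B_1(i)\}$ is a DCSMC whose kernel is ``add $X_1(i)$ with probability $1-b_1\mathbf{1}[B_1(i)\ge M_1]$, otherwise add $X_1(i)$ and subtract $M_1$''. I would then write the balance integral equation for a candidate stationary density $g_1$: for $0\le x<M_1$, $g_1(x)=\int_0^x g_1(y)\lambda_1 e^{-\lambda_1(x-y)}\,dy+b_1\int_{M_1}^{x+M_1}g_1(y)\lambda_1 e^{-\lambda_1(x-y+M_1)}\,dy$, and for $x\ge M_1$ the analogous identity with the extra term $(1-b_1)\int_{M_1}^{x}g_1(y)\lambda_1 e^{-\lambda_1(x-y)}\,dy$. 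Substituting the piecewise ansatz $g_1(x)=c_1(1-e^{Q_1 x})$ on $[0,M_1)$ and $g_1(x)=c_2 e^{Q_1 x}$ on $[M_1,\infty)$ — or, equivalently, differentiating the integral equation once to get a linear ODE — forces the characteristic relation $b_1\lambda_1 e^{Q_1 M_1}=b_1\lambda_1+Q_1$, whose solution via the Lambert $W$ function is $Q_1=-W(-b_1\lambda_1 M_1 e^{-b_1\lambda_1 M_1})/M_1-b_1\lambda_1$. A short branch analysis shows that a root with $Q_1<0$ (the only one giving an integrable density) exists iff $b_1\lambda_1 M_1=\psi_1>1$, whereas for $\psi_1\le 1$ the principal branch returns $Q_1=0$, consistent with Theorem~\ref{theo3}. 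Finally, continuity of $g_1$ at $x=M_1$ and normalization $\int_0^\infty g_1(x)\,dx=1$ pin down $c_1=1/M_1$ and $c_2=-Q_1/\big(M_1(b_1\lambda_1+Q_1)\big)$, which is~(\ref{eq21}); integrating the tail branch and using the characteristic relation yields $\mathrm{Pr}\{B_1(i)\ge M_1\}=\frac{e^{Q_1 M_1}}{M_1(b_1\lambda_1+Q_1)}=\frac{1}{M_1 b_1\lambda_1}$ and hence~(\ref{eq213}).

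The main obstacle is the first step: justifying that the conditional broadcast probability of $R1$ is the \emph{constant} $b_1$, independent of the precise value of $B_1(i)$ beyond the threshold, and — more delicately — that in~(\ref{eq211}) one may use the marginal $\mathrm{Pr}\{B_2(i)\ge M_2\}$ in place of $\mathrm{Pr}\{B_2(i)\ge M_2\mid B_1(i)\ge M_1,\ \textbf{S}(i)=\textbf{s}_4\}$. This is where the argument implicitly uses the (asymptotic) independence of the two buffer levels and the fact that the CBN-set chain has reached its stationary law; I would make it precise by noting that $X_1$ and $X_2$ are independent of each other and of all fading coefficients, and that the events triggering an $R1$-transmission depend on $B_1(i)$ only through $\mathbf{1}[B_1(i)\ge M_1]$ and on $B_2(i)$ only through $\mathbf{1}[B_2(i)\ge M_2]$. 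Everything after that reduction is the routine calculus already carried out in Appendix B, with $(b_2,\lambda_2,M_2)$ replaced by $(b_1,\lambda_1,M_1)$, so the write-up can simply invoke that computation rather than repeat it.
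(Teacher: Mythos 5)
Your proposal is correct and follows essentially the same route as the paper's Appendix C: reduce the protocol events to the single tail-jump probability $b_1$, write the stationary integral equation, posit the exponential tail $k_1 e^{Q_1 x}$ to obtain the characteristic relation $b_1\lambda_1 e^{Q_1 M_1}=b_1\lambda_1+Q_1$ and its Lambert-$W$ root, solve the Volterra equation on $[0,M_1)$, and normalize. Your explicit flagging of the implicit independence assumptions (that the $R1$-transmission events depend on $B_1(i)$, $B_2(i)$ only through the threshold indicators, and that the marginal $\mathrm{Pr}\{B_2(i)\ge M_2\}$ may replace the conditional one) is if anything more careful than the paper, which uses these facts without comment.
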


\begin{proof}
The proof is presented in Appendix C.
\end{proof}

\subsection{The Probability Distribution of CBN Set $\textbf{S}$ in the Case of Stable Energy Buffer}

The previous theorem \ref{theo2} and theorem \ref{theo4} illustrate that after some time slots, the energy buffers in $R1$ and $R2$ will reach their own stable states under the condition of $\psi_1 > 1$ and $\psi_2 > 1$. Moreover, when the energy buffers in $R1$ and $R2$ reach their own stable states, the probability distribution $\textbf{p}= [p_1, p_2, p_3, p_4]$ of CBN Set $\textbf{S}$ may be obtained by STM, which is expressed as follows
\begin{equation}\label{eqC1}
    \textbf{p}(i+1) = \textbf{p}(i)\textbf{T}(i),
\end{equation}
where $\textbf{p}(i)= [p_1(i), p_2(i), p_3(i), p_4(i)]$ denotes the values of $\textbf{p}$ in the $i$-th time slot, $\textbf{p}(i+1)= [p_1(i+1), p_2(i+1), p_3(i+1), p_4(i+1)]$ denotes the values of $\textbf{p}$ in the $(i+1)$-th time slot, and $\textbf{T}(i)$ is the STM of these probabilities from $\textbf{p}(i)$ to $\textbf{p}(i+1)$. Furthermore, according to table \ref{tab1}, the STM $\textbf{T}(i)$ can be constructed as follows
\begin{equation}\label{eqC2}
\textbf{T}(i)=\begin{bmatrix}
        p_{\textbf{s}_1 \textbf{s}_1}(i)      & p_{\textbf{s}_1 \textbf{s}_2}(i)  & p_{\textbf{s}_1 \textbf{s}_3}(i)  & p_{\textbf{s}_1 \textbf{s}_4}(i) \\
        p_{\textbf{s}_2 \textbf{s}_1}(i)      & p_{\textbf{s}_2 \textbf{s}_2}(i)  & p_{\textbf{s}_2 \textbf{s}_3}(i)  & p_{\textbf{s}_2 \textbf{s}_4}(i) \\
        p_{\textbf{s}_3 \textbf{s}_1}(i)      & p_{\textbf{s}_3 \textbf{s}_2}(i)  & p_{\textbf{s}_3 \textbf{s}_3}(i)  & p_{\textbf{s}_3 \textbf{s}_4}(i) \\
        p_{\textbf{s}_4 \textbf{s}_1}(i)      & p_{\textbf{s}_4 \textbf{s}_2}(i)  & p_{\textbf{s}_4 \textbf{s}_3}(i)  & p_{\textbf{s}_4 \textbf{s}_4}(i)
        \end{bmatrix},
\end{equation}
where, the element $p_{IJ}(k)(I,J\in\{\textbf{s}_1, \textbf{s}_2, \textbf{s}_3, \textbf{s}_4\})$ denotes the probability that the CBN set $\textbf{S}(i)$ changes from $I$ in the $i$-th time slot to $J$ in the $(i+1)$-th time slot, satisfying $\begin{matrix} \sum_{J\in\{\textbf{s}_1, \textbf{s}_2, \textbf{s}_3, \textbf{s}_4\}} p_{IJ}(i) \end{matrix}=1$. Furthermore, according to the variables $\gamma_{SD}$, $\gamma_{SR1}$, $\gamma_{SR2}$, $\gamma_{R1D}$, $\gamma_{R1R2}$, $\gamma_{R2D}$, $\mathrm{Pr}\{B_2(i)\ge M_2\}$,  $\mathrm{Pr}\{B_1(i)\ge M_1\}$, and the OR protocol shown in Table \ref{tab1}, $p_{IJ}(i)$ may be obtained as follows
\begin{equation}\label{eqC3}
    \begin{split}
            p_{\textbf{s}_1 \textbf{s}_1}(i) & = \mathrm{Pr}\{\gamma_{SD}(i)<\Gamma_{th}, \gamma_{SR1}(i)<\Gamma_{th}, \gamma_{SR2}(i)<\Gamma_{th}\} \\
            & \quad + \mathrm{Pr}\{C_1(i)\} \\
            & = \left(1-e^{-\Omega_{SD}\Gamma_{th}}\right)\left(1-e^{-\Omega_{SR1}\Gamma_{th}}\right)\left(1-e^{-\Omega_{SR2}\Gamma_{th}}\right) \\
            & \quad + e^{-\Omega_{SD}\Gamma_{th}},
    \end{split}
\end{equation}
\begin{equation}\label{eqC4}
    \begin{split}
            p_{\textbf{s}_1 \textbf{s}_2}(i) & = \mathrm{Pr}\{C_2(i)\} \\
            & = \left(1-e^{-\Omega_{SD}\Gamma_{th}}\right)e^{-\Omega_{SR1}\Gamma_{th}}\left(1-e^{-\Omega_{SR2}\Gamma_{th}}\right),
    \end{split}
\end{equation}
\begin{equation}\label{eqC5}
    \begin{split}
            p_{\textbf{s}_1 \textbf{s}_3}(i) & = \mathrm{Pr}\{C_3(i)\} \\
            & = \left(1-e^{-\Omega_{SD}\Gamma_{th}}\right)\left(1-e^{-\Omega_{SR1}\Gamma_{th}}\right)e^{-\Omega_{SR2}\Gamma_{th}},
    \end{split}
\end{equation}
\begin{equation}\label{eqC500}
    \begin{split}
            p_{\textbf{s}_1 \textbf{s}_4}(i) & = \mathrm{Pr}\{C_4(i)\} \\
            & = \left(1-e^{-\Omega_{SD}\Gamma_{th}}\right)e^{-\Omega_{SR1}\Gamma_{th}}e^{-\Omega_{SR2}\Gamma_{th}},
    \end{split}
\end{equation}
and,
\begin{equation}\label{eqC6}
    \begin{split}
            p_{\textbf{s}_2 \textbf{s}_1}(i) & = \mathrm{Pr}\{C_1(i)\}+\mathrm{Pr}\{C_5(i)\} \\
            & = e^{-\Omega_{SD}\Gamma_{th}} + \frac{\left(1-e^{-\Omega_{SD}\Gamma_{th}}\right)e^{-\Omega_{R1D}\Gamma_{th}}}{b_1(i) \lambda_1 M_1},
    \end{split}
\end{equation}
\begin{equation}\label{eqC11}
    \begin{split}
        p_{\textbf{s}_2 \textbf{s}_2}(i) & = \mathrm{Pr}\{\overline{C_1}(i),B_1(i)\ge M_1, \gamma_{R1D}(i)< \Gamma_{th},\\
        & \quad \gamma_{SR2}(i)< \Gamma_{th}, \gamma_{R1R2}(i)< \Gamma_{th}\} \\
        & \quad + \mathrm{Pr}\{\overline{C_1}(i),B_1(i)< M_1,\gamma_{SR2}(i)< \Gamma_{th}\} \\
        & = \left(1-e^{-\Omega_{SD}\Gamma_{th}}\right)\left(1-e^{-\Omega_{SR2}\Gamma_{th}}\right) \\
        & \quad \times \bigg[\frac{\left(1-e^{-\Omega_{R1D}\Gamma_{th}}\right)\left(1-e^{-\Omega_{R1R2}\Gamma_{th}}\right)}{b_1(i) \lambda_1 M_1}\\
        & \quad + \left(1-\frac{1}{b_1(i) \lambda_1 M_1} \right)\bigg],
    \end{split}
\end{equation}
\begin{equation}\label{eqC12}
    \begin{split}
        p_{\textbf{s}_2 \textbf{s}_4}(i) & = \mathrm{Pr}\{C_6(i)\}+\mathrm{Pr}\{C_7(i)\}+\mathrm{Pr}\{C_8(i)\} \\
        & = \left(1-e^{-\Omega_{SD}\Gamma_{th}}\right)\bigg[ e^{-\Omega_{SR2}\Gamma_{th}} \left( 1-\frac{e^{-\Omega_{R1D}\Gamma_{th}}}{b_1(i) \lambda_1 M_1}\right) \\
        &+\frac{\left(1-e^{-\Omega_{R1D}\Gamma_{th}}\right)\left(1-e^{-\Omega_{SR2}\Gamma_{th}}\right)e^{-\Omega_{R1R2}\Gamma_{th}}}{b_1(i)\lambda_1 M_1} \bigg],
    \end{split}
\end{equation}
\begin{equation}\label{eqC1200}
    \begin{split}
            p_{\textbf{s}_2 \textbf{s}_3}(i) & = 0,
    \end{split}
\end{equation}
and,
\begin{equation}\label{eqC13}
    \begin{split}
            p_{\textbf{s}_3 \textbf{s}_1}(i) & = \mathrm{Pr}\{C_1(i)\}+\mathrm{Pr}\{C_9(i)\} \\
            & = e^{-\Omega_{SD}\Gamma_{th}} + \frac{\left(1-e^{-\Omega_{SD}\Gamma_{th}}\right)e^{-\Omega_{R2D}\Gamma_{th}}}{b_2(i)\lambda_2 M_2},
    \end{split}
\end{equation}
\begin{equation}\label{eqC17}
    \begin{split}
             p_{\textbf{s}_3 \textbf{s}_3}(i) &= \mathrm{Pr}\{\overline{C_1}(i),B_2(i)< M_2 \} \\
            & \quad + \mathrm{Pr}\{\overline{C_1}(i),B_2(i)\ge M_2, \gamma_{R2D}(i) < \Gamma_{th}\} \\
            & = \left(1-e^{-\Omega_{SD}\Gamma_{th}}\right)\left(1-\frac{e^{-\Omega_{R2D}\Gamma_{th}}}{b_2(i)\lambda_2 M_2} \right),
    \end{split}
\end{equation}
\begin{equation}\label{eqC18}
    \begin{split}
            p_{\textbf{s}_3 \textbf{s}_2}(i) = p_{\textbf{s}_3 \textbf{s}_4}(i) = 0,
    \end{split}
\end{equation}
and,
\begin{equation}\label{eqC19}
    \begin{split}
        p_{\textbf{s}_4 \textbf{s}_1}(i) & = \mathrm{Pr}\{C_1(i)\}+\mathrm{Pr}\{C_9(i)\}+\mathrm{Pr}\{C_{10}(i)\} \\
        & \quad +\mathrm{Pr}\{C_{11}(i)\}\\
        & = e^{-\Omega_{SD}\Gamma_{th}} + \frac{\left(1-e^{-\Omega_{SD}\Gamma_{th}}\right)e^{-\Omega_{R2D}\Gamma_{th}}}{b_2(i)\lambda_2 M_2}\\
        & \quad + \frac{\left(1-e^{-\Omega_{SD}\Gamma_{th}}\right)e^{-\Omega_{R1D}\Gamma_{th}}}{b_1(i)\lambda_1 M_1}\\
        & \quad \times \left(1-\frac{e^{-\Omega_{R2D}\Gamma_{th}}}{b_2(i)\lambda_2 M_2}\right),
        \end{split}
\end{equation}
\begin{equation}\label{eqC20}
    \begin{split}
        p_{\textbf{s}_4 \textbf{s}_4}(i) &= \mathrm{Pr}\{\overline{C_1}(i),B_2(i)\ge M_2,\gamma_{R2D}(i) < \Gamma_{th},\\
        & \quad B_1(i)\ge M_1,\gamma_{R1D}(i) < \Gamma_{th} \} \\
        & + \mathrm{Pr}\{\overline{C_1}(i),B_2(i)\ge M_2,\gamma_{R2D}(i) < \Gamma_{th},\\
        & \quad B_1(i)< M_1, \} \\
        & + \mathrm{Pr}\{\overline{C_1}(i),B_2(i)< M_2,B_1(i)\ge M_1,\\
        & \quad \gamma_{R1D}(i) < \Gamma_{th} \} \\
        & + \mathrm{Pr}\{\overline{C_1}(i),B_2(i)< M_2,B_1(i)< M_1 \} \\
        & = 1 - p_{\textbf{s}_4 \textbf{s}_1}(i),
    \end{split}
\end{equation}
\begin{equation}\label{eqC21}
    \begin{split}
            p_{\textbf{s}_4 \textbf{s}_2}(i) = p_{\textbf{s}_4 \textbf{s}_3}(i) = 0,
    \end{split}
\end{equation}
where $C_l(i)(l\in \{1,2,\cdots, 11\})$ is given in Table \ref{tab1}, $b_2(i)$ and $b_1(i)$ are given by
\begin{equation}\label{eqC22}
    \begin{split}
            b_2(i) = \left[\left(p_3(i)+p_4(i)\right)\left(1-e^{-\Omega_{SD}\Gamma_{th}}\right)e^{-\Omega_{R2D}\Gamma_{th}}\right],
    \end{split}
\end{equation}
\begin{equation}\label{eqC23}
    \begin{split}
        b_1(i) &=\left(1-e^{-\Omega_{SD}\Gamma_{th}}\right)\bigg[p_2(i)e^{-\Omega_{R1D}\Gamma_{th}}\\
         & + p_2(i)\left(1-e^{-\Omega_{R1D}\Gamma_{th}}\right)\left(1-e^{-\Omega_{SR2}\Gamma_{th}}\right)e^{-\Omega_{R1R2}\Gamma_{th}} \\
         & + p_4(i)e^{-\Omega_{R1D}\Gamma_{th}}\left(1-\frac{e^{-\Omega_{R2D}\Gamma_{th}}}{M_2 b_2(i) \lambda_2}\right) \bigg].\\
    \end{split}
\end{equation}

The detailed process for obtaining probability distribution $\textbf{p}$ is shown in Alg. \ref{Alg.1}. Specifically, in line 5, the judgment condition $(\textbf{p}_{j_1}(i)\leq 0\,\vert\,\textbf{T}_{j_1,j_2}(i)<0)$ indicates that there are non-positive elements in $\textbf{p}(i)$ or $\textbf{T}(i)$, which is not desirable, so the update process needs to be terminated. Moreover, in line 8, $(||\textbf{p}(i)-\textbf{p}(i+1)||_{2}<10^{-7})$ indicates that the update error is small enough and the update process has converged. Therefore, the update process can be terminated.
\begin{algorithm}
    \caption{Probability Distribution $\textbf{p}$ of CBN Set $\textbf{S}$ based on STM}
    \label{Alg.1}
    \begin{algorithmic}[1]
        \Require $\Omega_{SD}$, $\Omega_{SR1}$, $\Omega_{SR2}$, $\Omega_{R1D}$, $\Omega_{R1R2}$, $\Omega_{R2D}$, $\Gamma_{th}$, $\lambda_1$, $\lambda_2$, $M_1$, $M_2$.
            \State Initialize $i=0$, $k=0$, $j_1$ and $j_2 \in \{1,2,3,4\}$, $\textbf{p}(0)=[p_1(0), p_2(0), p_3(0), p_4(0)]=[\frac{1}{4}, \frac{1}{4}, \frac{1}{4}, \frac{1}{4}]$.
            \While{1}
                \State Calculate $\textbf{T}(i)$ according to $\textbf{p}(i)$ and equations from Eq. (\ref{eqC3}) to Eq. (\ref{eqC23}),
                \State $\textbf{p}(i+1) =\textbf{p}(i)\textbf{T}(i)$,
                \If {$(\textbf{p}_{j_1}(i)\leq 0\,\vert\,\textbf{T}_{j_1,j_2}(i)<0)$}
                \State $k=i-1$,
                \State break,
                \ElsIf {$(||\textbf{p}(i)-\textbf{p}(i+1)||_{2}<10^{-7})$}
                \State $k=i$,
                \State break,
                \Else
                \State $i=i+1$,
                \EndIf
            \EndWhile
            \Ensure Probability Distribution $\textbf{p}(k)$.
    \end{algorithmic}
\end{algorithm}

\section{Analysis of Outage Probability and Throughput}
This section is concerned with the system performance metrics, including outage probability and throughput, when the states of system energy buffers are stable. Specifically, with the OR protocol shown in Table \ref{tab1}, the system is in outage if the destination node $D$ fails to receive the packet from the transmitting node $S$, $R1$ or $R2$. Furthermore, the system outage probability $OP$ is defined as follows
\begin{equation}\label{SEC IV1}
    \begin{split}
             OP = 1-\left(P_{\textbf{s}_1}+P_{\textbf{s}_2}+P_{\textbf{s}_3}+P_{\textbf{s}_4}\right),
    \end{split}
\end{equation}
where, $P_{\textbf{s}_1}$, $P_{\textbf{s}_2}$, $P_{\textbf{s}_3}$ and $P_{\textbf{s}_4}$ represent the probability that the destination node $D$ successfully receive the packet in the case of $\textbf{S}(i)=\textbf{s}_1$, $\textbf{S}(i)=\textbf{s}_2$, $\textbf{S}(i)=\textbf{s}_3$, $\textbf{S}(i)=\textbf{s}_4$, respectively. According to Table \ref{tab1}, they are expressed as follows
\begin{equation}\label{SEC IV2}
    \begin{split}
         P_{\textbf{s}_1} & = \mathrm{Pr}\{\textbf{S}(i)=\textbf{s}_1\}\mathrm{Pr}\{C_1(i)\} \\
         & = p_1 e^{-\Omega_{SD}\Gamma_{th}},
    \end{split}
\end{equation}
\begin{equation}\label{SEC IV3}
    \begin{split}
         P_{\textbf{s}_2} & = \mathrm{Pr}\{\textbf{S}(i)=\textbf{s}_2\}\Big(\mathrm{Pr}\{C_1(i)\}+\mathrm{Pr}\{C_5(i)\}\Big) \\
         & = p_2\Bigg[e^{-\Omega_{SD}\Gamma_{th}}+ \frac{\left(1-e^{-\Omega_{SD}\Gamma_{th}}\right)e^{-\Omega_{R1D}\Gamma_{th}}}{b_1 \lambda_1 M_1} \Bigg],
    \end{split}
\end{equation}
\begin{equation}\label{SEC IV4}
    \begin{split}
         P_{\textbf{s}_3} & = \mathrm{Pr}\{\textbf{S}(i)=\textbf{s}_3\}\Big(\mathrm{Pr}\{C_1(i)\}+\mathrm{Pr}\{C_9(i)\}\Big) \\
         & = p_3\Bigg[e^{-\Omega_{SD}\Gamma_{th}}+ \frac{\left(1-e^{-\Omega_{SD}\Gamma_{th}}\right)e^{-\Omega_{R2D}\Gamma_{th}}}{b_2 \lambda_2 M_2} \Bigg],
    \end{split}
\end{equation}
\begin{equation}\label{SEC IV400}
    \begin{split}
         P_{\textbf{s}_4} & = \mathrm{Pr}\{\textbf{S}(i)=\textbf{s}_4\}\Big(\mathrm{Pr}\{C_1(i)\}+\mathrm{Pr}\{C_9(i)\}\\
         &+ \mathrm{Pr}\{C_{10}(i)\}+\mathrm{Pr}\{C_{11}(i)\}\Big) \\
         & = p_4\Bigg[e^{-\Omega_{SD}\Gamma_{th}}+ \frac{\left(1-e^{-\Omega_{SD}\Gamma_{th}}\right)e^{-\Omega_{R2D}\Gamma_{th}}}{b_2 \lambda_2 M_2}+ \\
         &\frac{\left(1-e^{-\Omega_{SD}\Gamma_{th}}\right)e^{-\Omega_{R1D}\Gamma_{th}}}{b_1 \lambda_1 M_1}\left(1-\frac{e^{-\Omega_{R2D}\Gamma_{th}}}{b_2 \lambda_2 M_2} \right)\Bigg],
    \end{split}
\end{equation}
where $C_1(i)$, $C_5(i)$, $C_9(i)$, $C_{10}(i)$ and $C_{11}(i)$ are given in Table \ref{tab1}. According to $b_2$ in Eq. (\ref{eq1400}) and the equations from Eq. (\ref{SEC IV1}) to Eq. (\ref{SEC IV400}), it can be known that the system outage probability $OP$ in Eq. (\ref{SEC IV1}) can be rewritten as follows
\begin{equation}\label{SEC IV5}
    \begin{split}
             OP  =& 1-\Bigg\{e^{-\Omega_{SD}\Gamma_{th}}+\frac{1}{\lambda_2 M_2}\\
             & +\Bigg[p_2 + p_4\left(1-\frac{e^{-\Omega_{R2D}\Gamma_{th}}}{b_2 \lambda_2 M_2} \right)\Bigg]\\
             & \times\frac{\left(1-e^{-\Omega_{SD}\Gamma_{th}}\right)e^{-\Omega_{R1D}\Gamma_{th}}}{b_1 \lambda_1 M_1}\Bigg\}.
    \end{split}
\end{equation}

Furthermore, considering that the system needs to consume sub slots $t1$ and $t2$ for pilot broadcasting and NACK/ACK broadcasting at the beginning and end of a time slot, as shown in Fig. \ref{fig3}, the throughput $\pi$ of the system is defined as follows
\begin{equation}\label{SEC IV8}
    \begin{split}
         \pi & = \eta R_0\left(1 - OP\right)\\
         & = \Bigg\{e^{-\Omega_{SD}\Gamma_{th}}+\frac{1}{\lambda_2 M_2} + \Bigg[p_2 + p_4\left(1-\frac{e^{-\Omega_{R2D}\Gamma_{th}}}{b_2 \lambda_2 M_2} \right)\Bigg]\\
         & \quad \times\frac{\left(1-e^{-\Omega_{SD}\Gamma_{th}}\right)e^{-\Omega_{R1D}\Gamma_{th}}}{b_1 \lambda_1 M_1}\Bigg\}\eta R_0,
    \end{split}
\end{equation}
Where $\eta$ is the loss factor of throughput $\pi$. In addition, in order to analyze the impact of data transmission rate $R_0$ on system throughput $\pi$, the derivative of $\pi$ with respect to $R_0$ is shown in Eq. (\ref{SEC IV10}) and Eq. (\ref{SEC IV11}). Furthermore, the maximum throughput $\pi_{max}$ and the optimum value $R^{max}_0$ of data transmission rate $R_0$ may be obtained by making the derivative $\frac{d\pi}{d R_0} = 0$.
\begin{figure*} 
	\centering
    \begin{equation}\label{SEC IV10}
        \begin{split}
            \dfrac{d\pi}{d R_0} &= \dfrac{d\{\eta R_0 \left(1 - OP\right)\}}{d R_0} = \eta \left(1 - OP\right) -\eta R_0 \dfrac{dOP}{d R_0} \\
            & = \Bigg\{e^{-\Omega_{SD}\Gamma_{th}}+\frac{1}{\lambda_2 M_2} + \bigg[p_2 + p_4\left(1-\frac{e^{-\Omega_{R2D}\Gamma_{th}}}{b_2 \lambda_2 M_2} \right)\bigg]\frac{\left(1-e^{-\Omega_{SD}\Gamma_{th}}\right)e^{-\Omega_{R1D}\Gamma_{th}}}{b_1 \lambda_1 M_1}\Bigg\}\eta + \frac{p_4 2^{R_{0}}\ln(2)\eta R_0 \Omega_{SD}}{b_1 b_2 \lambda_1 \lambda_2 M_1 M_2}\\
            & \times e^{-\left(\Omega_{SD}+\Omega_{R1D}+\Omega_{R2D}\right)\Gamma_{th}}- 2^{R_{0}}\ln(2)\eta R_0 \Omega_{SD}e^{-\Omega_{R1D}\Gamma_{th}} + \eta R_0 \bigg[p_2 + p_4\left(1-\frac{e^{-\Omega_{R2D}\Gamma_{th}}}{b_2 \lambda_2 M_2} \right)\bigg]\\
            &\times \Bigg\{\frac{2^{R_{0}}\ln(2) e^{-\Omega_{R1D}\Gamma_{th}}}{b_1 \lambda_1 M_1}\bigg[e^{-\Omega_{R1D}\Gamma_{th}}\left(\Omega_{SD}+\Omega_{R1D}\right)-\Omega_{R1D}\bigg] -\frac{\left(1-e^{-\Omega_{SD}\Gamma_{th}}\right)e^{-\Omega_{R1D}\Gamma_{th}}}{{b_1}^{2} \lambda_1 M_1}\dfrac{d b_1}{d R_0} \Bigg\},
        \end{split}
    \end{equation}
\end{figure*}
\begin{figure*} 
	\centering
    \begin{equation}\label{SEC IV11}
        \begin{split}
            \dfrac{d b_1}{d R_0} & = p_2 2^{R_{0}}\ln(2)\Omega_{SD}e^{-\Omega_{SD}\Gamma_{th}}\bigg[e^{-\Omega_{R1D}\Gamma_{th}}+\left(1-e^{-\Omega_{R1D}\Gamma_{th}}\right)
            \left(1-e^{-\Omega_{SR2}\Gamma_{th}}\right)e^{-\Omega_{R1R2}\Gamma_{th}}\bigg]+p_2\left(1-e^{-\Omega_{SD}\Gamma_{th}}\right)\\
            &\times\Bigg\{2^{R_{0}}\ln(2)\Omega_{R1D}e^{-\Omega_{R1D}\Gamma_{th}}\bigg[\left(1-e^{-\Omega_{SR2}\Gamma_{th}}\right)e^{-\Omega_{R1R2}\Gamma_{th}}-1 \bigg] + 2^{R_{0}}\ln(2)\left(1-e^{-\Omega_{R1D}\Gamma_{th}}\right)e^{-\Omega_{R1R2}\Gamma_{th}}\\
            &\times \bigg[\left(\Omega_{SR2}+\Omega_{R1R2}\right) e^{-\Omega_{SR2}\Gamma_{th}}-\Omega_{R1R2}\bigg]\Bigg\} + p_4 2^{R_{0}}\ln(2)e^{-\Omega_{R1D}\Gamma_{th}}\bigg[\left(\Omega_{SD}+\Omega_{R1D}\right) e^{-\Omega_{SD}\Gamma_{th}}-\Omega_{R1D}\bigg]\\
            & \times \left(1-\frac{e^{-\Omega_{R2D}\Gamma_{th}}}{b_2 \lambda_2 M_2}\right) + \frac{p_4 2^{R_{0}}\ln(2) \Omega_{SD}e^{-\left(\Omega_{SD}+\Omega_{R1D}+\Omega_{R2D}\right)\Gamma_{th}}}{b_2 \lambda_2 M_2}.
        \end{split}
    \end{equation}
\end{figure*}

\section{Numerical Results}
In this section, the derived analytical closed-forms and the theoretical analysis given will be numerically validated in this section. For all simulations, the system parameters are considered when the stability conditions $\psi_1>1$ and $\psi_2>1$ are satisfied. Specifically, assume $S$, $R1$, $R2$ and $D$ are all located in a two dimensional plane, and their position coordinates in meters are (0, 0), (30, 20), (60, -20) and (100, 0), respectively. Moreover, let the noise variance $N_0 = -50$ dBm, the path-loss parameter $\alpha=3$, and the loss factor $\eta=5\%$. Besides, in all the figures except Fig. \ref{fig14}, markers denote simulation values and lines represent the STM-based theoretical values.

\begin{figure}[H]
\centerline{\includegraphics[width=3.5in]{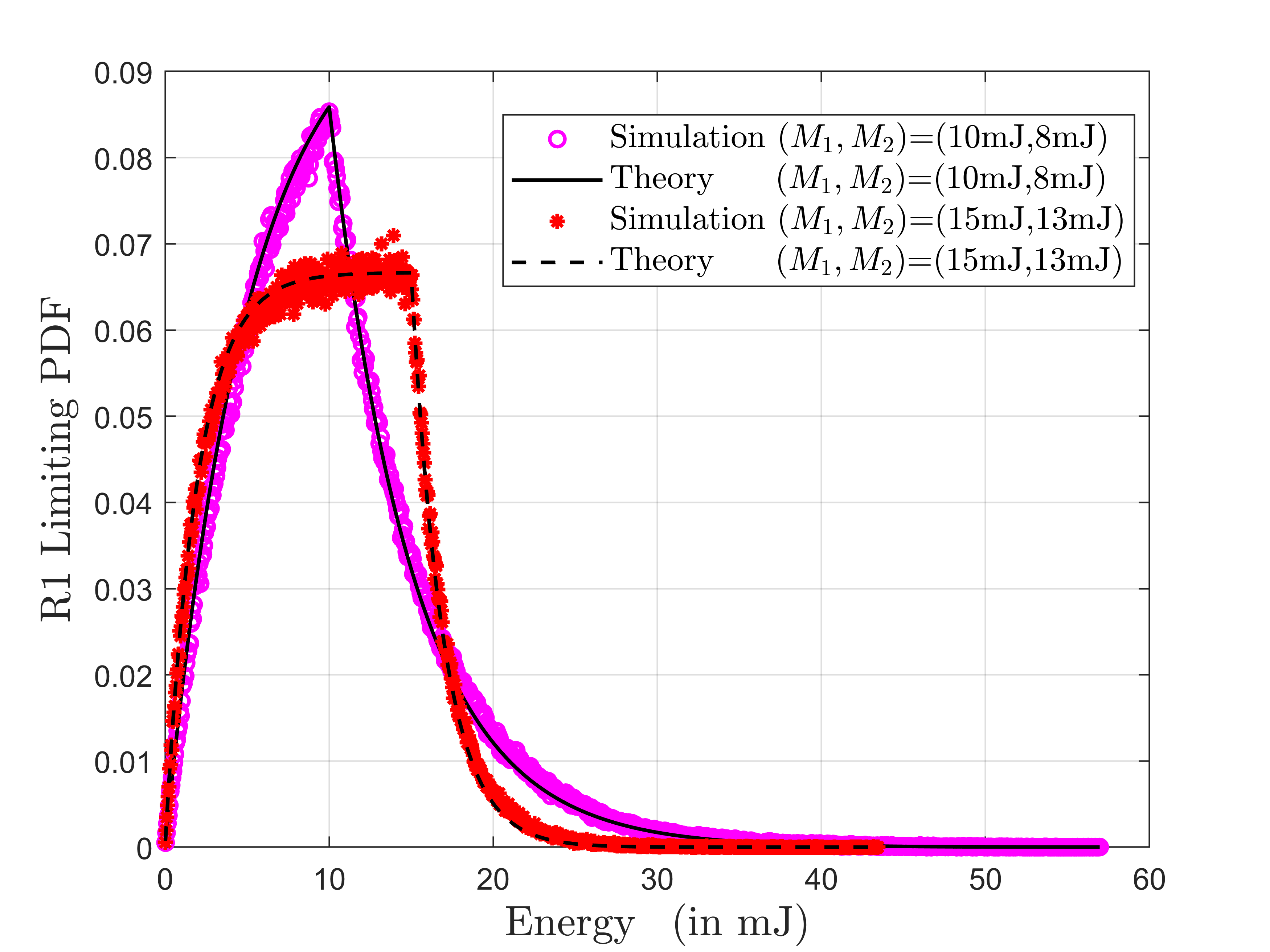}}
\caption{Limiting PDF of energy stored in $R1$ buffer, with parameters $\frac{1}{\lambda_1} = -6$ dB, $\frac{1}{\lambda_2} = -8$ dB, $R_0 = 3$ bit/s/Hz, $P_S = 15$ dBm and two different relay energy consumptions \{$(M_1, M_2)$ = (10 mJ, 8 mJ), (15 mJ, 13 mJ)\}.\label{fig4}}
\end{figure}

\begin{figure}[H]
\centerline{\includegraphics[width=3.5in]{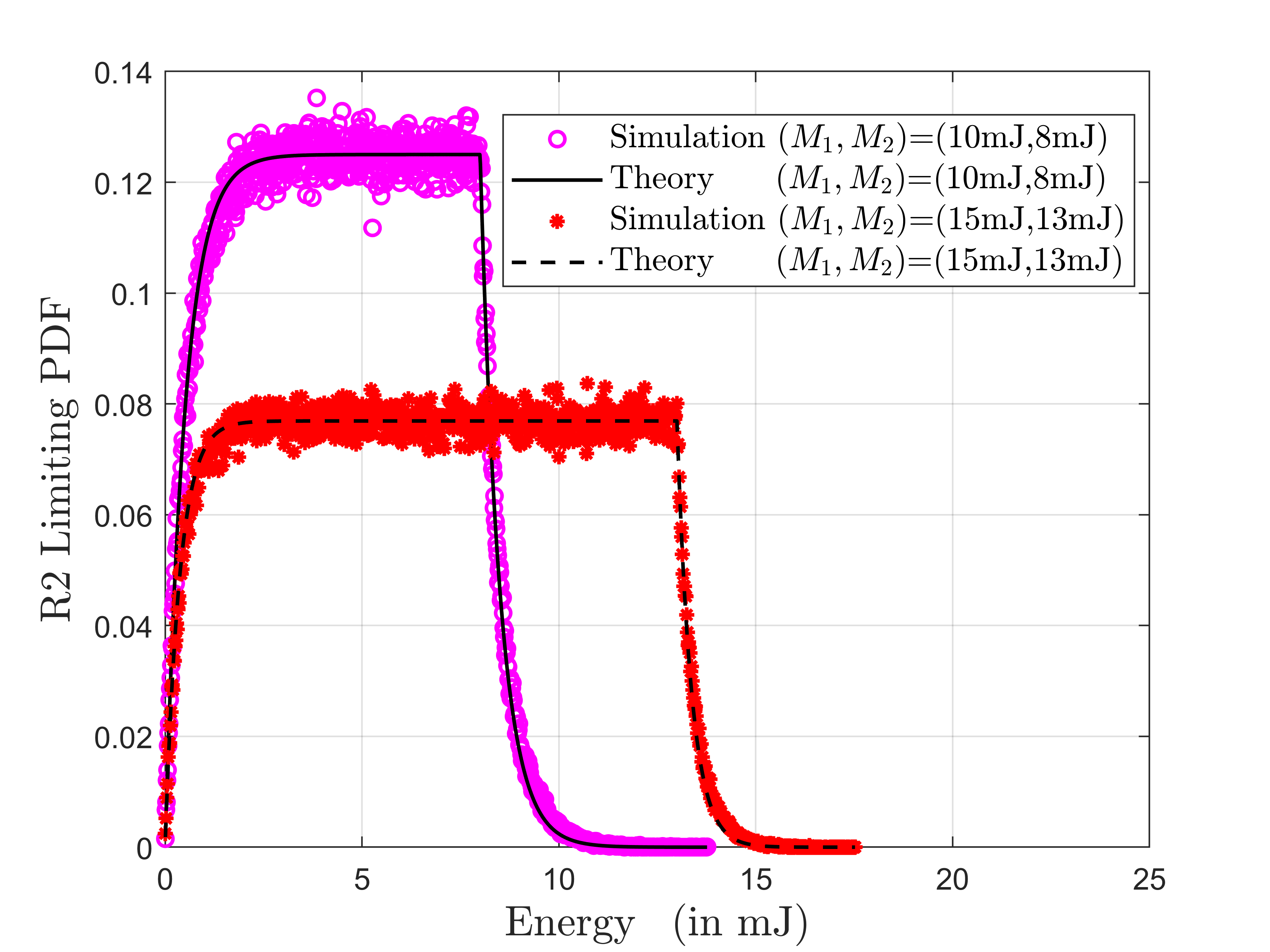}}
\caption{Limiting PDF of energy stored in $R2$ buffer, with parameters $\frac{1}{\lambda_1} = -6$ dB, $\frac{1}{\lambda_2} = -8$ dB, $R_0 = 3$ bit/s/Hz, $P_S = 15$ dBm, and two different relay energy consumptions \{$(M_1, M_2)$ = (10 mJ,8 mJ), (15 mJ, 13 mJ)\}.\label{fig5}}
\end{figure}

Fig. \ref{fig4} and Fig. \ref{fig5} depict the limiting PDF of energy stored in $R1$ buffer and $R2$ buffer when the relay energy consumption is $(M_1, M_2)$ = (10 mJ, 8 mJ) and $(M_1, M_2)$ = (15 mJ, 13 mJ), respectively. It can be clearly seen from Fig. \ref{fig4} that the theoretical PDF curve matchs the simulated PDF scatters. Furthermore, the same case can be seen from Fig. \ref{fig5}. Both Fig. \ref{fig4} and Fig. \ref{fig5} effectively verifies the derived theoretical expressions in Eq. (\ref{eq14}) and Eq. (\ref{eq21}).

\begin{figure}
\centerline{\includegraphics[width=3.5in]{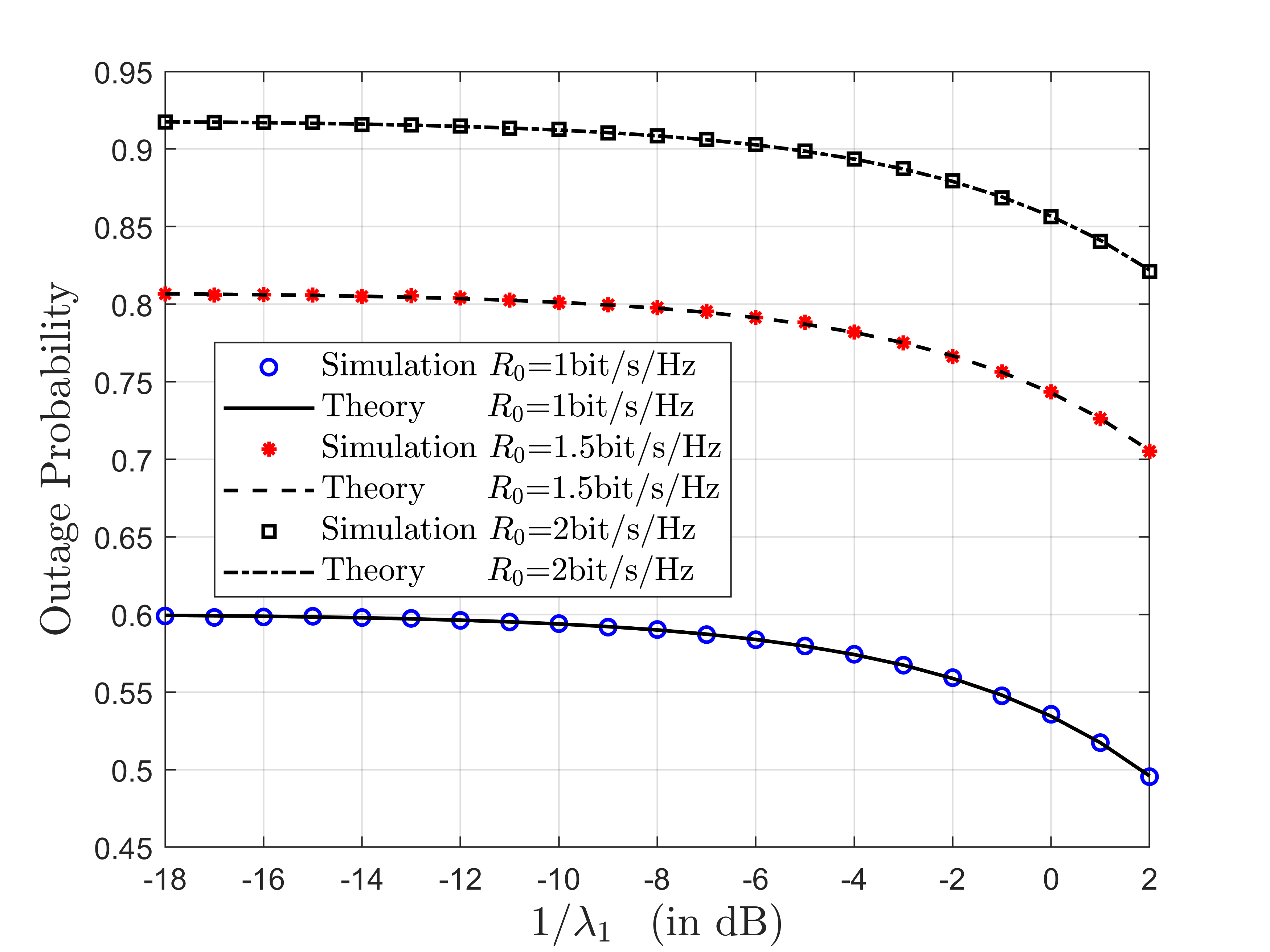}}
\caption{Outage probability of system against relay node $R1$ energy-harvested mean $\frac{1}{\lambda_1}$, with parameters $\frac{1}{\lambda_2}=-5$ dB, $P_S = 10$ dBm, $M_1 = 15$ mJ, $M_2 = 10$ mJ and three different data transmission rates ($R_0$ =1, 1.5 and 2 bit/s/Hz).\label{fig6}}
\end{figure}

\begin{figure}
\centerline{\includegraphics[width=3.5in]{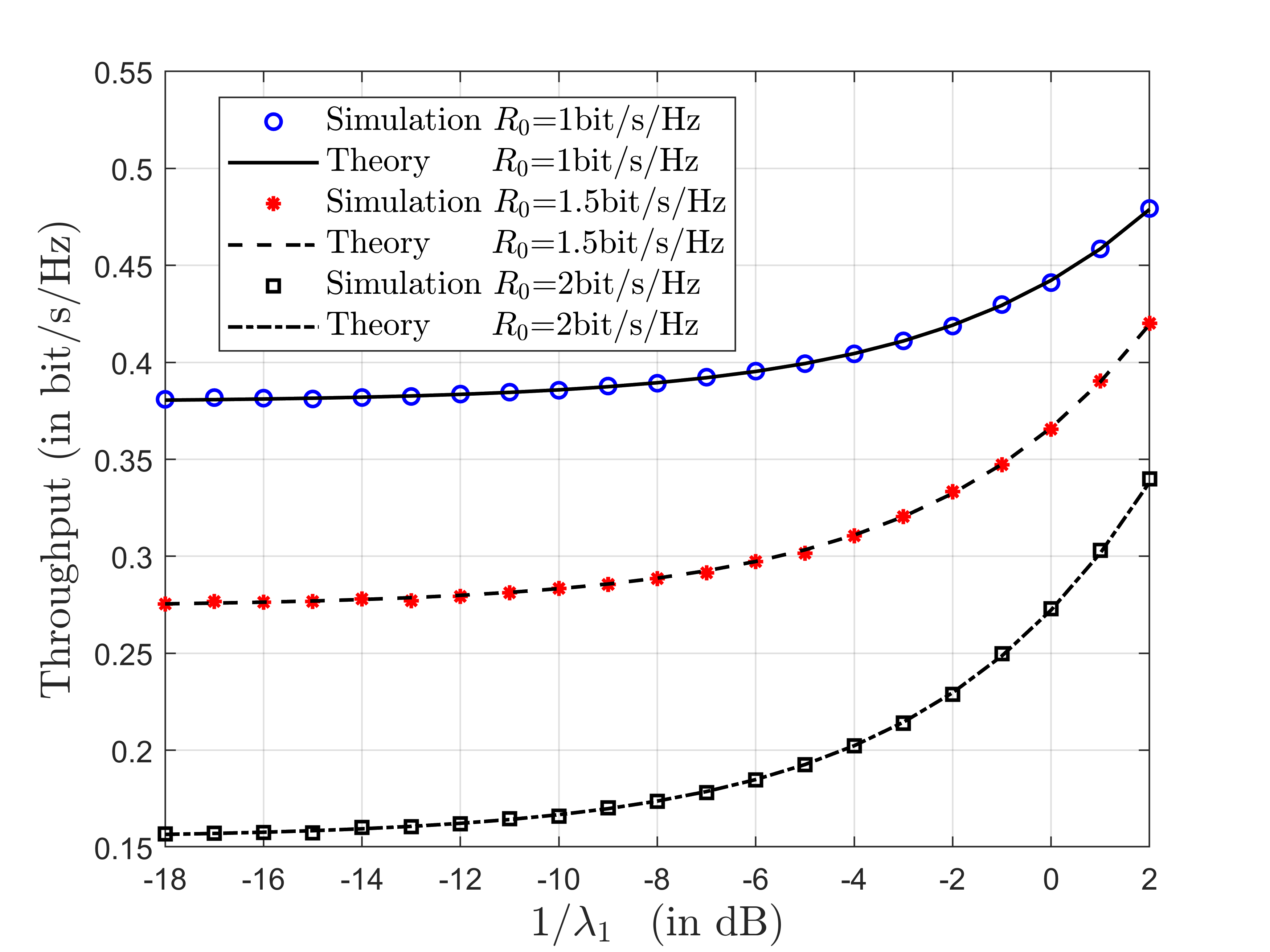}}
\caption{Throughput of system against relay node $R1$ energy-harvested mean $\frac{1}{\lambda_1}$, with parameters $\frac{1}{\lambda_2}=-5$ dB, $P_S = 10$ dBm, $M_1 = 15$ mJ, $M_2 = 10$ mJ, and three different data transmission rates ($R_0$ =1, 1.5 and 2 bit/s/Hz).\label{fig7}}
\end{figure}
Fig. \ref{fig6} and Fig. \ref{fig7} present the variation of system outage probability and system throughput with relay node $R1$ energy-harvested mean $\frac{1}{\lambda_1}$ for three different data transmission rates ($R_0$ =1, 1.5 and 2 bit/s/Hz), respectively. From Fig. \ref{fig6}, it can be seen that the system outage probabilities, which are obtained by simulation and theory, decrease with the increase of $\frac{1}{\lambda_1}$ under the condition that the value of $R_0$ is fixed, on the contrary, decrease with the decrease of the value of $R_0$ when $\frac{1}{\lambda_1}$ is fixed. In contrast, in Fig. \ref{fig7}, the system throughputs, which are obtained by simulation and theory, increase with the increase of $\frac{1}{\lambda_1}$ under the condition that the value of $R_0$ is fixed, and conversely, increase with the decrease of the value of $R_0$ when $\frac{1}{\lambda_1}$ is fixed. This is due to the fact that the increase of $\frac{1}{\lambda_1}$ may lead to the $R1$ having more opportunities to forward the packet, and the decrease of the value of $R_0$ would reduce the threshold $\Gamma_{th}$ which indicates $D$ may successfully receive the packet. Moreover, from Fig. \ref{fig6} and Fig. \ref{fig7}, it can also be seen that the results obtained by theoretical analysis are consistent with the simulation, which effectively verifies the theoretical analysis of system outage probability and system throughput from variations of $\frac{1}{\lambda_1}$ and $R_0$.

\begin{figure}
\centerline{\includegraphics[width=3.5in]{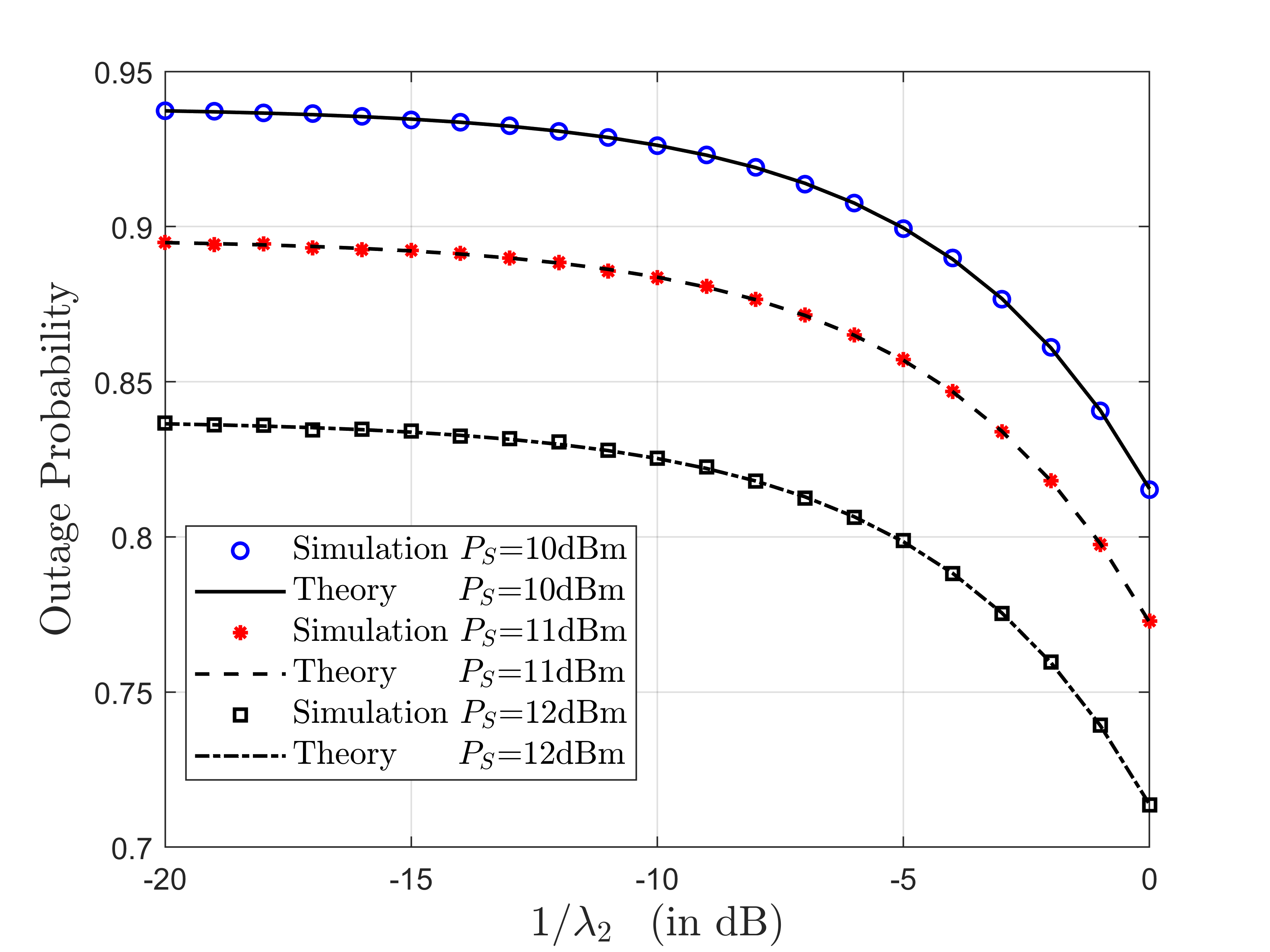}}
\caption{Outage probability of system against relay node $R2$ energy-harvested mean $\frac{1}{\lambda_2}$, with parameters $\frac{1}{\lambda_1}=-10$ dB, $R_0 = 2$ bit/s/Hz, $M_1 = 8$ mJ, $M_2 = 8$ mJ and three different source transmit powers ($P_S$ = 10, 11 and 12 dBm).\label{fig8}}
\end{figure}

\begin{figure}
\centerline{\includegraphics[width=3.5in]{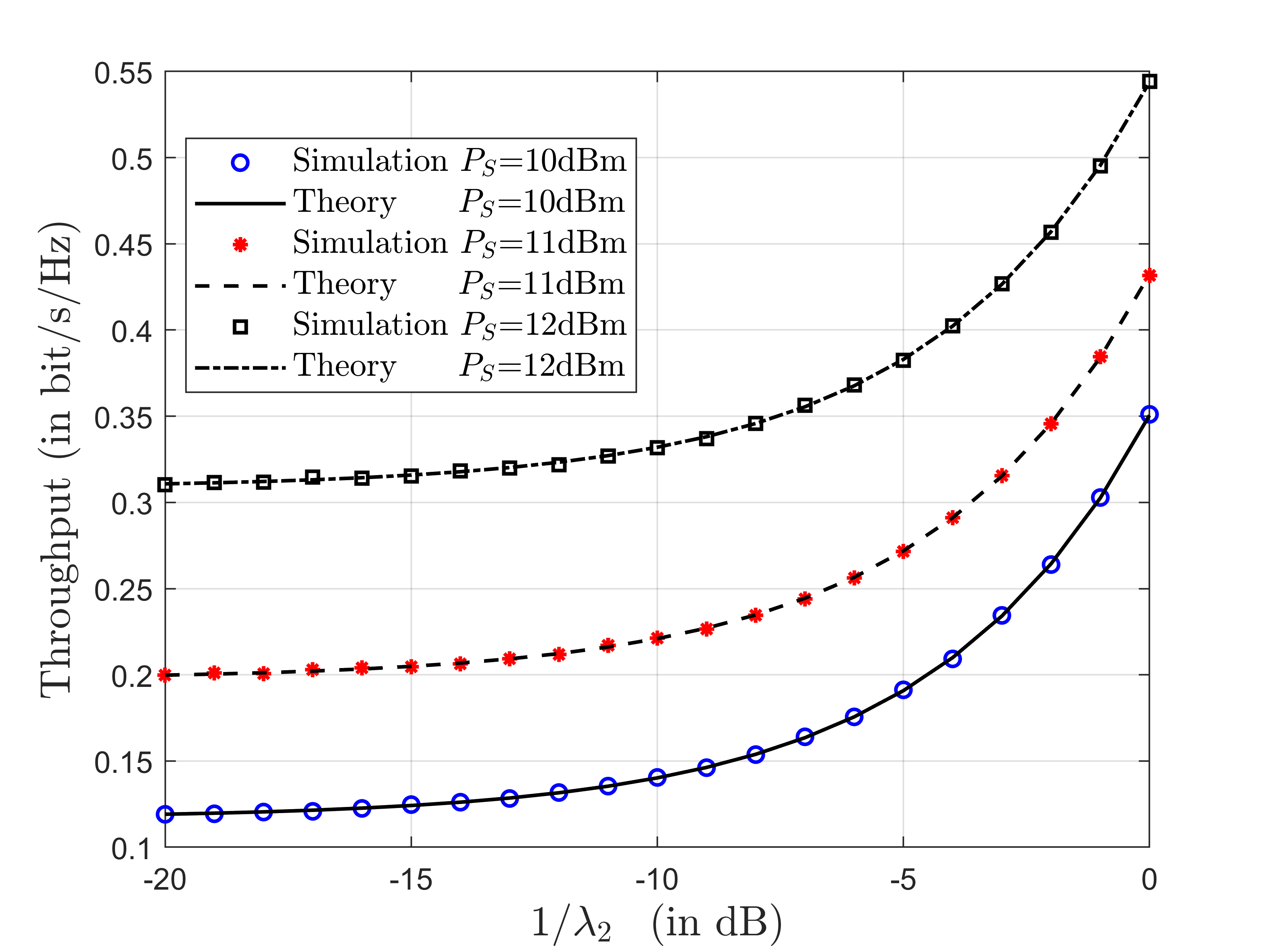}}
\caption{Throughput of system against relay node $R2$ energy-harvested mean $\frac{1}{\lambda_2}$, with parameters $\frac{1}{\lambda_1}=-10$ dB, $R_0 = 2$ bit/s/Hz, $M_1 = 8$ mJ, $M_2 = 8$ mJ and three different source transmit powers ($P_S$ = 10, 11 and 12 dBm).\label{fig9}}
\end{figure}

Fig. \ref{fig8} and \ref{fig9} illustrate the variation of system outage probability and system throughput with relay node $R2$ energy-harvested mean $\frac{1}{\lambda_2}$ for three different source transmit powers ($P_S$ = 10, 11 and 12 dBm), respectively. From Fig. \ref{fig8}, it can be seen that the system outage probabilities, which are obtained by simulation and theory, decrease with the increase of $\frac{1}{\lambda_2}$ under the condition that the value of $P_S$ is fixed, similarly, decrease with the increase of the value of $P_S$ when $\frac{1}{\lambda_2}$ is fixed. However, a comparison of Fig. \ref{fig8} and Fig. \ref{fig9} shows that under the same parameter setting conditions, the changing trend of system throughput in Fig. \ref{fig9} is opposite to that of system outage probability in Fig. \ref{fig8}. This results from that the increase of $\frac{1}{\lambda_2}$ and $P_S$ may increase the probability of $R2$ and $S$ to forward the packet, respectively. So that the probability of $D$ successfully receiving the packet increases. In addition, both Fig. \ref{fig8} and Fig. \ref{fig9} show that the results obtained by theoretical analysis are consistent with the simulation, which effectively verifies the theoretical analysis of system outage probability and system throughput by the variations of $\frac{1}{\lambda_2}$ and $P_S$.

\begin{figure}
\centerline{\includegraphics[width=3.5in]{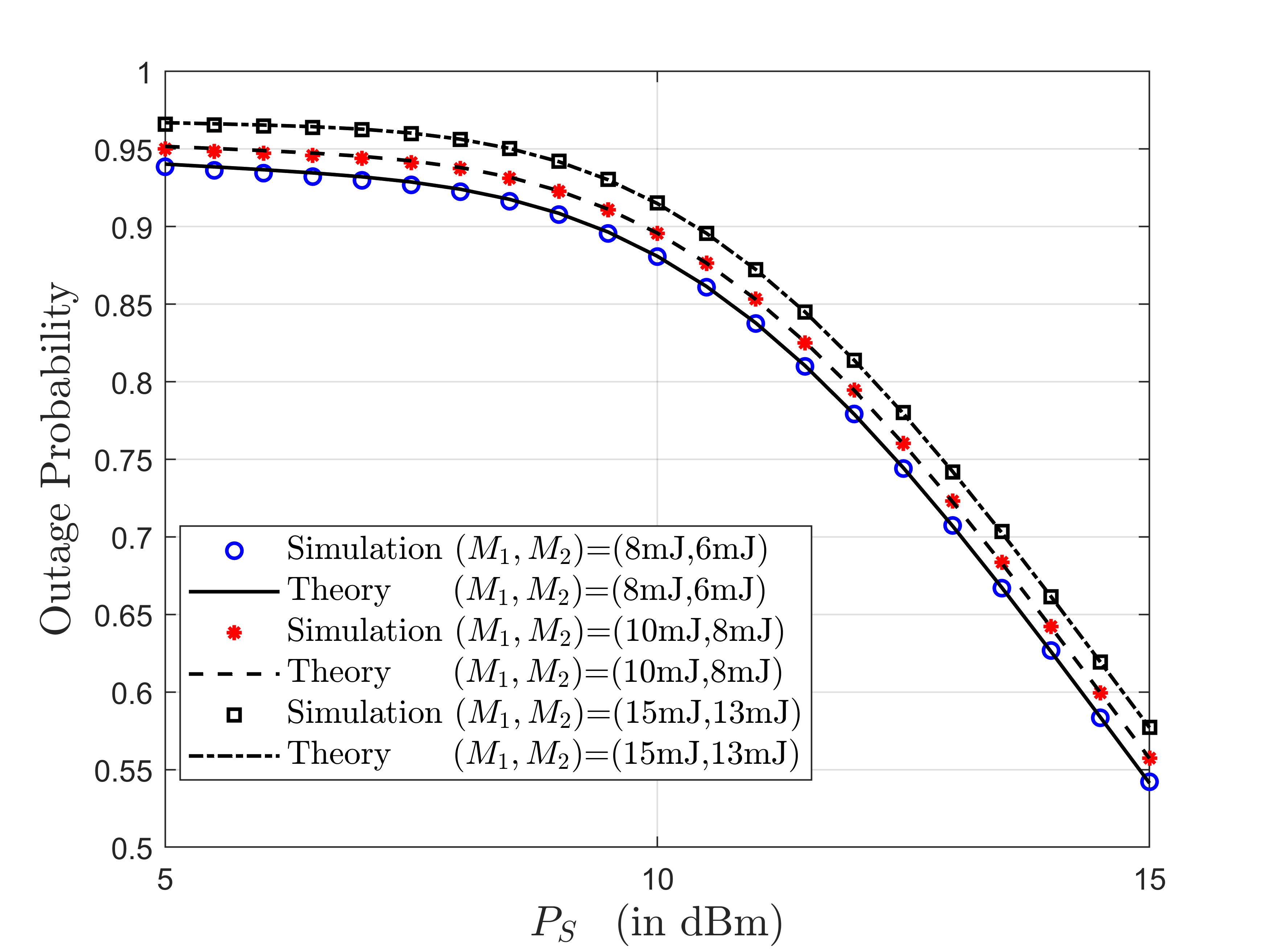}}
\caption{Outage probability of system against source transmit power $P_S$, with parameters $\frac{1}{\lambda_1} = -6$ dB, $\frac{1}{\lambda_2} = -6$ dB, $R_0 = 2$ bit/s/Hz and three different relay energy consumptions \{$(M_1, M_2)$ = (8 mJ, 6 mJ), (10 mJ, 8 mJ) and (15 mJ, 13 mJ)\}.\label{fig10}}
\end{figure}

\begin{figure}
\centerline{\includegraphics[width=3.5in]{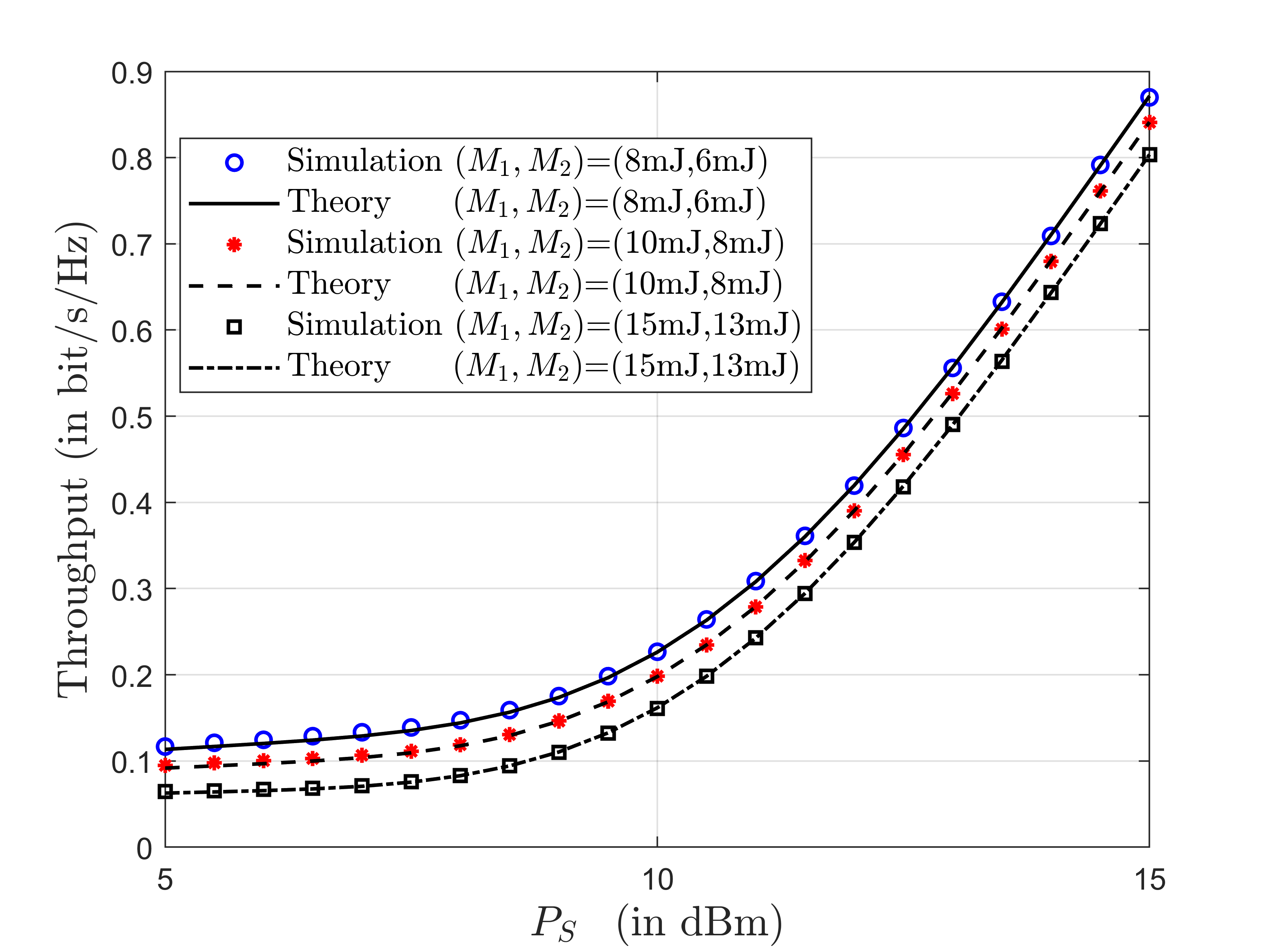}}
\caption{Throughput of system against source transmit power $P_S$, with parameters $\frac{1}{\lambda_1} = -6$ dB, $\frac{1}{\lambda_2} = -6$ dB, $R_0 = 2$ bit/s/Hz and three different relay energy consumptions \{$(M_1, M_2)$ = (8 mJ, 6 mJ), (10 mJ, 8 mJ) and (15 mJ, 13 mJ)\}.\label{fig11}}
\end{figure}

Fig. \ref{fig10} and Fig. \ref{fig11} depict the variation of system outage probability and system throughput with the source transmit power $P_S$ for three different relay energy consumptions \{$(M_1, M_2)$ = (8 mJ, 6 mJ), (10 mJ, 8 mJ) and (15 mJ, 13 mJ)\}, respectively. From Fig. \ref{fig10}, it can be seen that the system outage probabilities, which are obtained by simulation and theory, decrease with the increase of $P_S$ under the condition that the value of $(M_1, M_2)$ is fixed, on the contrary, increase with the increase of the value of $(M_1, M_2)$ when $P_S$ is fixed. Then, comparing Fig. \ref{fig11} with Fig. \ref{fig10}, it is easy to find that under the same parameter setting conditions, the changing trend of system throughput in Fig. \ref{fig11} is opposite to that of system outage probability in Fig. \ref{fig10}. Furthermore, it is also easy to find that when $P_S>10$ dBm, compared with relay energy consumption $(M_1, M_2)$, the source transmits power $P_S$ has a greater impact on system outage probability and system throughput. This is because with the gradual increase of source transmission power $P_S$, the probability of the packet directly transmitted from the source node $S$ to the destination node $D$ increases significantly. Meanwhile, it can be seen from both Fig. \ref{fig10} and Fig. \ref{fig11} that the theoretical results fit into the simulation results, which effectively verifies the theoretical analysis of system outage probability and system throughput from the variations of $P_S$ and $(M_1, M_2)$.

\begin{figure}
\centerline{\includegraphics[width=3.5in]{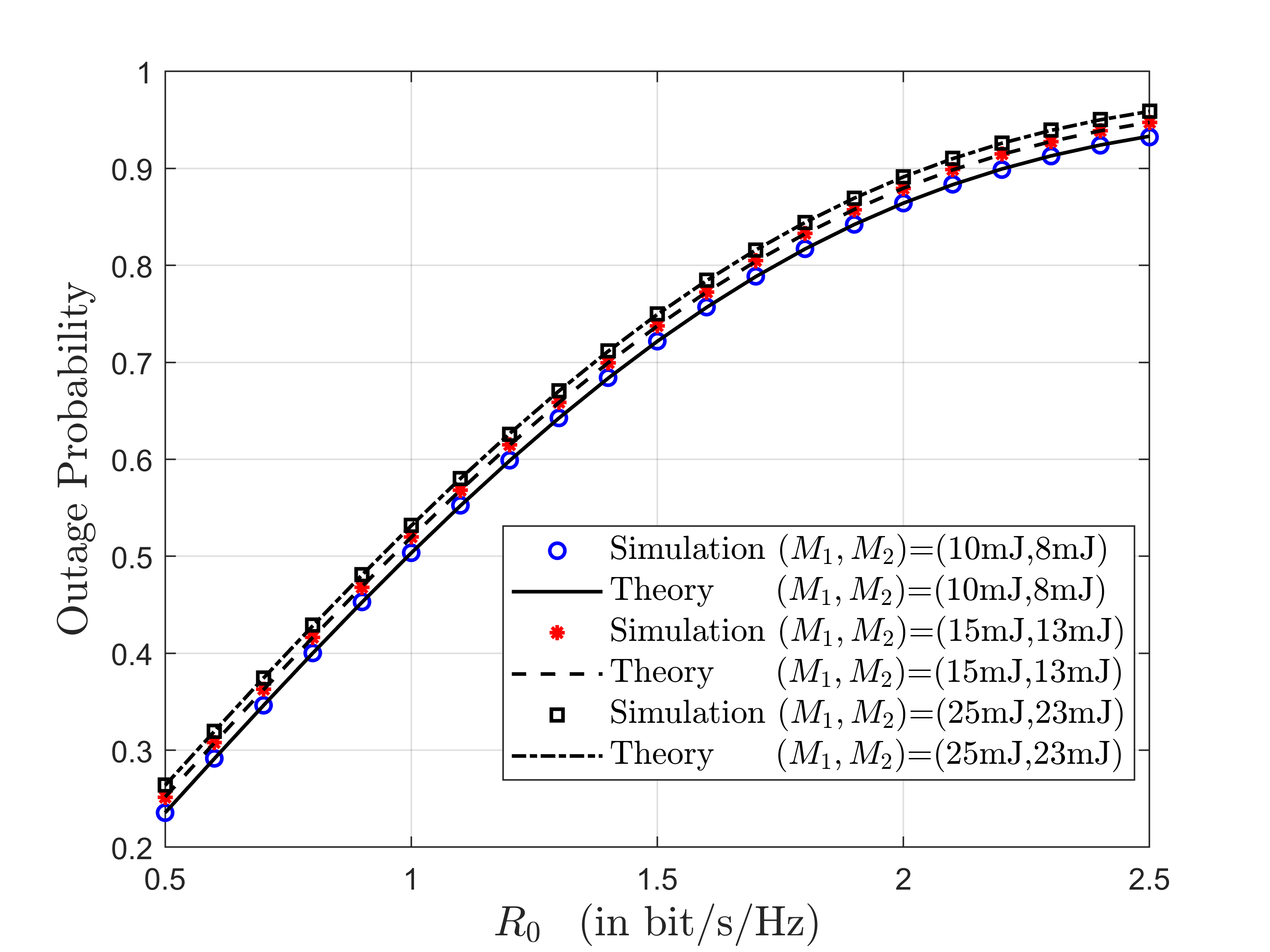}}
\caption{Outage probability of system against data transmission rate $R_0$, with parameters $\frac{1}{\lambda_1} = -7$ dB, $\frac{1}{\lambda_2} = -7$ dB, $P_S = 11$ dBm and three different relay energy consumptions \{$(M_1, M_2)$ = (10 mJ, 8 mJ), (15 mJ, 13 mJ) and (25 mJ, 23 mJ)\}.\label{fig12}}
\end{figure}

\begin{figure}
\centerline{\includegraphics[width=3.5in]{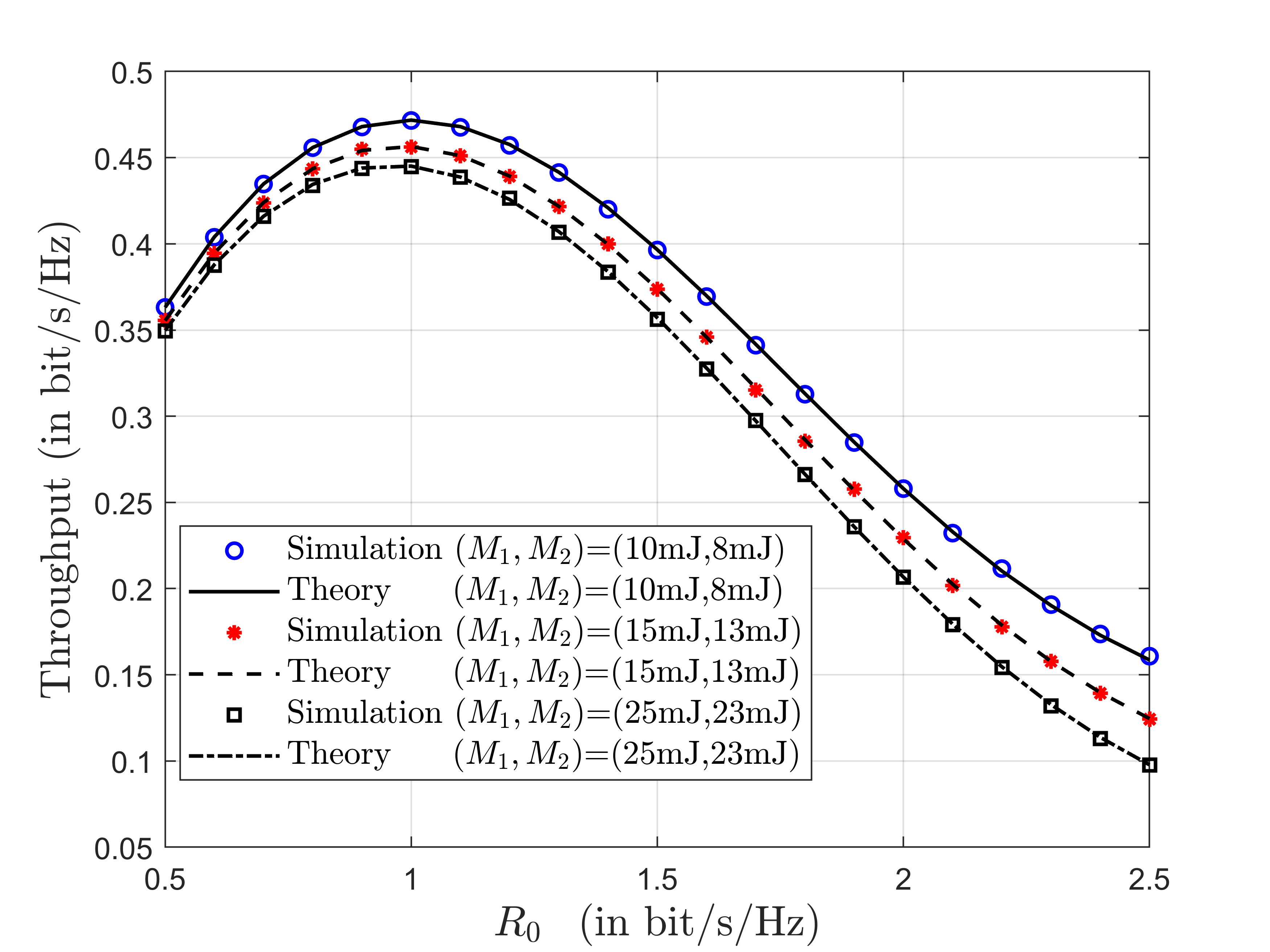}}
\caption{Throughput of system against data transmission rate $R_0$, with parameters $\frac{1}{\lambda_1} = -7$ dB, $\frac{1}{\lambda_2} = -7$ dB, $P_S = 11$ dBm and three different relay energy consumptions \{$(M_1, M_2)$ = (10 mJ, 8 mJ), (15 mJ, 13 mJ) and (25 mJ, 23 mJ)\}.\label{fig13}}
\end{figure}

Fig. \ref{fig12} and Fig. \ref{fig13} present the variation of system outage probability and system throughput with the data transmission rate $R_0$ for three different relay energy consumptions \{$(M_1, M_2)$ = (10 mJ, 8 mJ), (15 mJ, 13 mJ) and (25 mJ, 23 mJ)\}, respectively. From Fig. \ref{fig12}, it can be seen that the system outage probabilities, which are obtained by simulation and theory, increase with the increase of $R_0$ or $(M_1, M_2)$. Especially compared with relay energy consumption $(M_1, M_2)$, data transmission rate $R_0$ has a greater impact on system outage probabilities. This is due to the increase of the value of $R_0$ would significantly increase the threshold $\Gamma_{th}$, which may greatly reduce the probability of node $D$ successfully receiving the packet. However, from Fig. \ref{fig13}, it can be seen that for three different relay energy consumptions \{$(M_1, M_2)$ = (10 mJ, 8 mJ), (15 mJ, 13 mJ) and (25 mJ, 23 mJ)\}, the system throughputs, which are obtained by simulation and theory, increase with the increase of $R_0$ when $R_0 < 1$ bit/s/Hz, and conversely, decrease with the increase of $R_0$ when the $R_0 > 1$ bit/s/Hz. Therefore, it can be concluded that under this system parameter setting, the optimal throughput can be obtained when $R_0$ is about 1 bit/s/Hz. Moreover, both Fig. \ref{fig12} and Fig. \ref{fig13} show that the curves obtained from the theoretical analysis are consistent with the simulation values, which effectively verifies the theoretical analysis of system outage probability and system throughput from variations of $R_0$ and $(M_1, M_2)$.

\begin{figure}
\centerline{\includegraphics[width=3.5in]{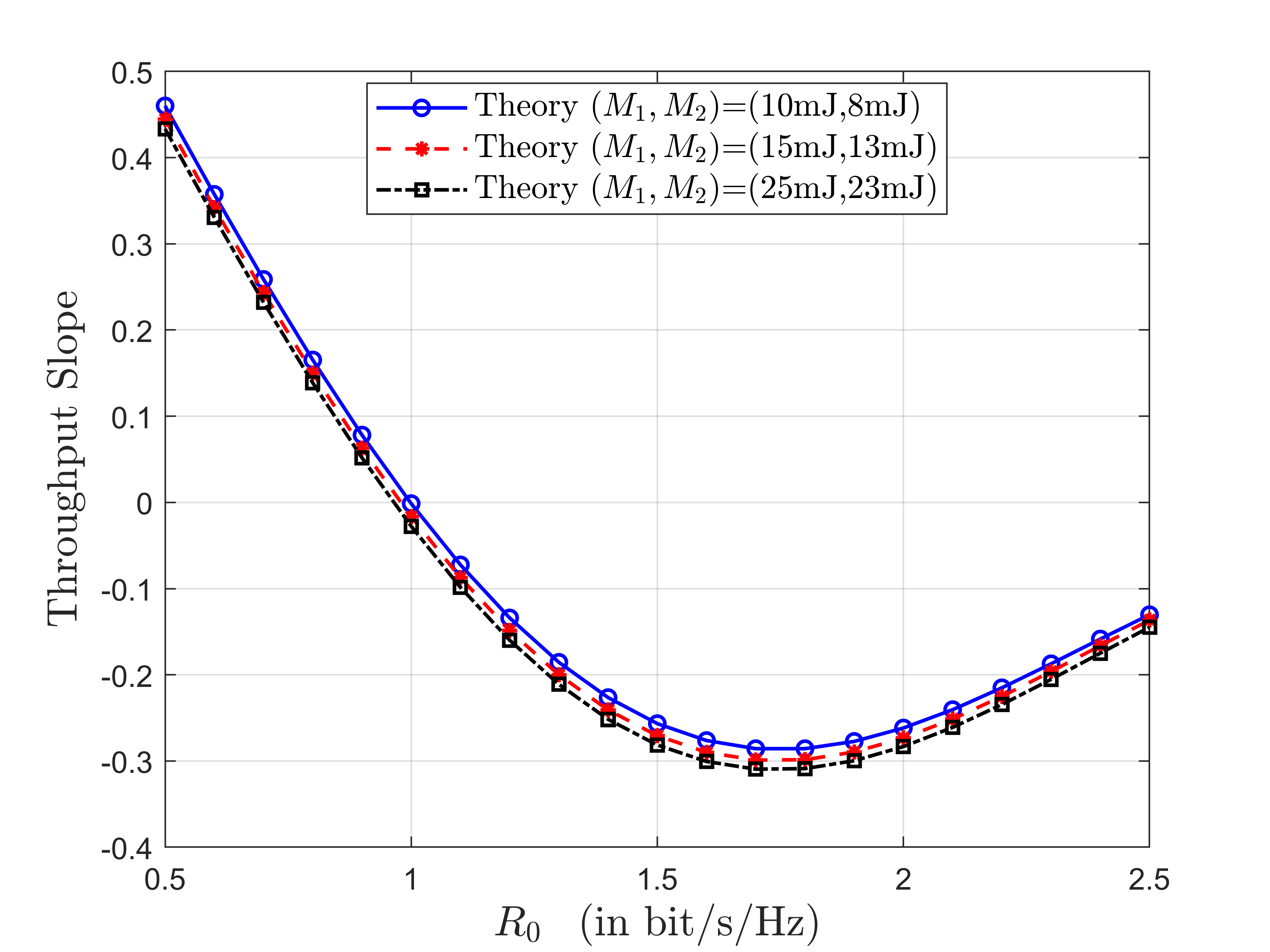}}
\caption{Slope of system theoretical throughput against data transmission rate $R_0$, with parameters $\frac{1}{\lambda_1} = -7$ dB, $\frac{1}{\lambda_2} = -7$ dB, $P_S = 11$ dBm and three different relay energy consumptions \{$(M_1, M_2)$ = (10 mJ, 8 mJ), (15 mJ, 13 mJ) and (25 mJ, 23 mJ)\}.\label{fig14}}
\end{figure}

Fig. \ref{fig14} illustrates the variation of slope of theoretical throughput curves to data transmission rate $R_0$ for three different relay energy consumptions \{$(M_1, M_2)$ = (10 mJ, 8 mJ), (15 mJ, 13 mJ) and (25 mJ, 23 mJ)\}, respectively. From Fig. \ref{fig14}, it can be clearly observed that $R_0 = 1$ bit/s/Hz is approximately the point where the slopes of the three different theoretical throughput curves are zero. Moreover, the slopes corresponding to points on the left of $R_0 = 1$ bit/s/Hz are positive, while the slopes corresponding to points on the right of $R_0 = 1$ bit/s/Hz is negative. That is, when $R_0$ is about 1 bit/s/Hz, the optimal throughput can be obtained, which is consistent with the analysis result in Fig. \ref{fig13}, and also effectively verifies the theoretical analysis of the system throughput.

\section{Conclusion}
This paper proposes the OR aided cooperative communication network with EH. Where the OR protocol is proposed to select the packet transmission path based on the node transmission priority. Additionally, the STM-based algorithm is proposed to find the probability distribution of the CBN set. Using both DCSMC model and the probability distribution, the existence conditions and analytical expressions of the limiting distribution of energy in energy buffers are determined. Then, based on the limiting distribution of energy in energy buffers, the network outage probability and throughput are analyzed, and the corresponding closed-form expressions are given. Furthermore, various simulation findings show that the simulation results are in line with the theoretical analysis results. The relay system containing more than two EH relay nodes is proposed as a future research direction.

\appendices
\section{Proof of Theorem 1}
According to the energy storage process $B_2(i)$ with the infinite-size energy buffer in Eq. (\ref{eq10}), the variable $O_R(i)$ is defined by \cite[Appendix B]{15}
\begin{equation}\label{APPENDIX A1}
            O_R(i) = \begin{cases}
                          1,\quad \mathbb{P}_{22} \\
                          0,\quad \mathbb{P}_{21}
                        \end{cases}
\end{equation}
where $\mathbb{P}_{21}$ and $\mathbb{P}_{22}$ have been given in Eq. (\ref{eq11}) and Eq. (\ref{eq12}), respectively. Implying Eq. (\ref{APPENDIX A1}), the energy storage process in Eq. (\ref{eq10}) can be re-expressed as follows
\begin{equation}\label{APPENDIX A5}
            B_2(i+1) - B_2(i) = X_2(i) - M_2 O_R(i).
\end{equation}
According to the law of large numbers, the average energy harvesting rate can be obtained as follows
\begin{equation}\label{APPENDIX A6}
            \mathbb{E}[X_2(i)] =  \lim_{N \to \infty}\frac{1}{N}\sum_{i=1}^N X_2(i),
\end{equation}
similarly, the average energy consumption rate can be given by
\begin{equation}\label{APPENDIX A7}
    \begin{split}
            \mathbb{E}[M_2 O_R(i)] & = \lim_{N \to \infty}\frac{1}{N}\sum_{i=1}^N M_2 O_R(i) \\
            & \leq M_2\Big\{\lim_{N \to \infty}\frac{1}{N}\sum_{i=1}^N O_R(i)\Big\}.
    \end{split}
\end{equation}
moreover,
\begin{equation}\label{APPENDIX A8}
    \begin{split}
        \mathbb{E}[O_R(i)] & =  \lim_{N \to \infty}\frac{1}{N}\sum_{i=1}^N O_R(i) \\
        & = 1 \times \mathrm{Pr}\{O_R(i) = 1\} + 0 \times \mathrm{Pr}\{O_R(i) = 0\} \\
        & = \mathrm{Pr}\{\textbf{S}(i) = \textbf{s}_3,C_9(i)\} + \mathrm{Pr}\{\textbf{S}(i) = \textbf{s}_4,C_9(i)\} \\
        & \leq \left( p_3+p_4\right)\left(1- e^{-\Omega_{SD}\Gamma_{th}}\right)e^{-\Omega_{R2D}\Gamma_{th}}.
    \end{split}
\end{equation}
From Eq. (\ref{APPENDIX A7}) and Eq. (\ref{APPENDIX A8}), we obtain
\begin{equation}\label{APPENDIX A9}
    \begin{split}
            \mathbb{E}[M_2 O_R(i)] & \leq M_2 \left( p_3+p_4\right)\left(1- e^{-\Omega_{SD}\Gamma_{th}}\right)e^{-\Omega_{R2D}\Gamma_{th}}.
    \end{split}
\end{equation}
If $\psi_1 \leq 1$, from Eq. (\ref{eq13}), we get
\begin{equation}\label{APPENDIX A10}
    \begin{split}
            \mathbb{E}[X_2(i)] \ge  M_2 \left( p_3+p_4\right)\left(1- e^{-\Omega_{SD}\Gamma_{th}}\right)e^{-\Omega_{R2D}\Gamma_{th}}.
    \end{split}
\end{equation}
Therefore, if $\psi_1 \leq 1$, from Eq. (\ref{APPENDIX A6}), Eq. (\ref{APPENDIX A7}), Eq. (\ref{APPENDIX A9}) and Eq. (\ref{APPENDIX A10}), we obtain
\begin{equation}\label{APPENDIX A11}
            \lim_{N \to \infty}\frac{1}{N}\sum_{i=1}^N X_2(i) \ge \lim_{N \to \infty}\frac{1}{N}\sum_{i=1}^N M_2 O_R(i).
\end{equation}
According to Eq. (\ref{APPENDIX A5}) and Eq. (\ref{APPENDIX A11}), we have
\begin{equation}\label{APPENDIX A12}
            \lim_{N \to \infty}\frac{1}{N}\sum_{i=1}^N B_2(i+1) \ge \lim_{N \to \infty}\frac{1}{N}\sum_{i=1}^N B_2(i).
\end{equation}
Clearly, if the inequality condition holds in Eq. (\ref{APPENDIX A12}), the comparison of between $B_2(i+1)$ and $B_2(i)$ shows that the energy accumulates in the buffer over time slot, i.e., $\begin{matrix} \lim_{i \to \infty}\mathbb{E}[B_2(i)] = \infty\end{matrix}$. Therefore, the stationary distribution of $B_2(i)$ does not exist, and after a finite number of time slots, $B_2(i) > M_2$ almost always hold \cite[Appendix A]{34}. In addition, if the equality condition holds in Eq. (\ref{APPENDIX A11}), according to Eq. (\ref{APPENDIX A11}) and Eq. (\ref{APPENDIX A7}), we get
\begin{equation}\label{APPENDIX A13}
            \lim_{N \to \infty}\frac{1}{N}\sum_{i=1}^N X_2(i) = \lim_{N \to \infty}\frac{1}{N}\sum_{i=1}^N M_2 O_R(i).
\end{equation}
Eq. (\ref{APPENDIX A13}) indicates that in the energy buffer with the DCSMC model, the average energy harvesting rate equals the average energy consumption rate, which is unstable \cite{29}. Therefore, the buffer may almost always provide $M_2$ amount of energy.

\section{Proof of Theorem 2}
According to the total probability theorem, $\mathbb{P}_{21}$ and $\mathbb{P}_{22}$ together constitutes a complete set of events accompanying event $B_2(i+1)$. Therefore, the cumulative distribution function (CDF) of $B_2(i+1)$ in storage process in Eq. (\ref{eq10}) may be evaluated as follows \cite[Appendix C]{15}
\begin{equation}\label{APPENDIX B1}
    \begin{split}
         & \mathrm{Pr}\{B_2(i+1) \leq x\} \\
         & \quad = \mathrm{Pr}\{B_2(i)+X_2(i) \leq x, \textbf{S}(i) = \textbf{s}_1\} \\
         & \quad + \mathrm{Pr}\{B_2(i)+X_2(i) \leq x, \textbf{S}(i) = \textbf{s}_2\} \\
         & \quad + \mathrm{Pr}\{B_2(i)+X_2(i) \leq x, \textbf{S}(i) = \textbf{s}_3, C_1(i) \} \\
         & \quad + \mathrm{Pr}\{B_2(i)+X_2(i) \leq x, \textbf{S}(i) = \textbf{s}_3,\overline{C_1}(i), B_2(i)< M_2 \} \\
         & \quad + \mathrm{Pr}\{B_2(i)+X_2(i) \leq x, \textbf{S}(i) = \textbf{s}_3,\overline{C_1}(i), B_2(i)\ge M_2, \\
         & \quad \gamma_{R2D}(i)< \Gamma_{th} \} \\
         & \quad + \mathrm{Pr}\{B_2(i)+X_2(i) \leq x, \textbf{S}(i) = \textbf{s}_4, C_1(i) \} \\
         & \quad + \mathrm{Pr}\{B_2(i)+X_2(i) \leq x, \textbf{S}(i) = \textbf{s}_4,\overline{C_1}(i), B_2(i)< M_2 \} \\
         & \quad + \mathrm{Pr}\{B_2(i)+X_2(i) \leq x, \textbf{S}(i) = \textbf{s}_4,\overline{C_1}(i), B_2(i)\ge M_2, \\
         & \quad \gamma_{R2D}(i)< \Gamma_{th} \} \\
         & \quad + \mathrm{Pr}\{B_2(i)+X_2(i)-M_2 \leq x, \textbf{S}(i) = \textbf{s}_3,C_9(i) \} \\
         & \quad + \mathrm{Pr}\{B_2(i)+X_2(i)-M_2 \leq x, \textbf{S}(i) = \textbf{s}_4,C_9(i) \}.
    \end{split}
\end{equation}
Let $B_2(i)=\mu_2$, $X_2(i)=X_2$, $\textbf{S}(i) = \textbf{S}$, $C_1(i) = C_1$, $\overline{C_1}(i) = \overline{C_1}$, $C_9(i) = C_9$, and $\gamma_{R2D}(i)=\gamma_{R2D}$. Eq. (\ref{APPENDIX B1}) may be presented by
\begin{equation}\label{APPENDIX B2}
    \begin{split}
             & \mathrm{Pr}\{B_2(i+1) \leq x\} \\
             & \quad = \mathrm{Pr}\{\mu_2+X_2 \leq x, \textbf{S} = \textbf{s}_1\} \\
             & \quad + \mathrm{Pr}\{\mu_2+X_2 \leq x, \textbf{S} = \textbf{s}_2\} \\
             & \quad + \mathrm{Pr}\{\mu_2+X_2 \leq x, \textbf{S} = \textbf{s}_3, C_1 \} \\
             & \quad + \mathrm{Pr}\{\mu_2+X_2 \leq x, \textbf{S} = \textbf{s}_3,\overline{C_1}, B_2(i)< M_2 \} \\
             & \quad + \mathrm{Pr}\{\mu_2+X_2 \leq x, \textbf{S} = \textbf{s}_3,\overline{C_1}, B_2\ge M_2,\gamma_{R2D}< \Gamma_{th} \} \\
             & \quad + \mathrm{Pr}\{\mu_2+X_2 \leq x, \textbf{S} = \textbf{s}_4, C_1 \} \\
             & \quad + \mathrm{Pr}\{\mu_2+X_2 \leq x, \textbf{S} = \textbf{s}_4,\overline{C_1}, \mu_2< M_2 \} \\
             & \quad + \mathrm{Pr}\{\mu_2+X_2 \leq x, \textbf{S} = \textbf{s}_4,\overline{C_1}, \mu_2\ge M_2,\gamma_{R2D}< \Gamma_{th} \} \\
             & \quad + \mathrm{Pr}\{\mu_2+X_2-M_2 \leq x, \textbf{S} = \textbf{s}_3,C_9 \} \\
             & \quad + \mathrm{Pr}\{\mu_2+X_2-M_2 \leq x, \textbf{S} = \textbf{s}_4,C_9 \} .
    \end{split}
\end{equation}
Denote $G^{i+1}_2(x)$ as the CDF of $B_2(i+1)$, namely, $\mathrm{Pr}\{B_2(i+1) \leq x\} = G^{i+1}_2(x)$. When $i \to \infty$, if the buffer reaches its steady state, it follows that, $\mathrm{Pr}\{B_2(i+1) \leq x\} = G^{i+1}_2(x) = G^{i}_2(x) = G_2(x)$. In this state, Eq. (\ref{APPENDIX B2}) may be written as follows
\begin{equation}\label{APPENDIX B3}
    \begin{split}
         G_2(x) & = (p_1+p_2) \int_{\mu_2=0}^{x} F_{X_2}(x-\mu_2)g_2(\mu_2)\, d\mu_2 \\
         & + (p_3+p_4) \int_{r_1=\Gamma_{th}}^{\infty}\int_{\mu_2=0}^{x} F_{X_2}(x-\mu_2)g_2(\mu_2) \\
         & \times f_{\gamma_{SD}}(r_1)\, d\mu_2\, dr_1 \\
         & + (p_3+p_4) \int_{r_1=0}^{\Gamma_{th}}\int_{\mu_2=0}^{min(x,M_2)} F_{X_2}(x-\mu_2)g_2(\mu_2) \\
         & \times f_{\gamma_{SD}}(r_1)\, d\mu_2\, dr_1 \\
         & + (p_3+p_4) \int_{r_1=0}^{\Gamma_{th}}\int_{r_2=0}^{\Gamma_{th}}\int_{\mu_2=M_2}^{x} F_{X_2}(x-\mu_2)g_2(\mu_2) \\               & \times f_{\gamma_{SD}}(r_1)f_{\gamma_{R2D}}(r_2)\, d\mu_2\, dr_1\, dr_2 \\
         & +(p_3+p_4)\int_{r_1=0}^{\Gamma_{th}}\int_{r_2=\Gamma_{th}}^{\infty}\int_{\mu_2=M_2}^{x+M_2} F_{X_2}(x+M_2-\mu_2) \\
         & \times g_2(\mu_2)f_{\gamma_{SD}}(r_1)f_{\gamma_{R2D}}(r_2)\, d\mu_2\, dr_1\, dr_2,
    \end{split}
\end{equation}
where $F_{X_2}(x)=1-e^{-\lambda_2 x}$ is the CDF of $X_2$. $g_2(x)$ is the derivative of $G_2(x)$. In addition, $f_{\gamma_{SD}}(r_1) = \Omega_{SD}e^{-\Omega_{SD}r_1}$ and $f_{\gamma_{R2D}}(r_2) = \Omega_{R2D}e^{-\Omega_{R2D}r_2}$ are the PDFs of $\gamma_{SD}$ and $\gamma_{R2D}$, respectively. Substituting $f_{\gamma_{SD}}(r_1)$ and $f_{\gamma_{R2D}}(r_2)$ in Eq. (\ref{APPENDIX B3}), and simplifying Eq. (\ref{APPENDIX B3}), we arrive at
\begin{equation}\label{APPENDIX B4}
            G_2(x) = \begin{cases}
                          G_{21}(x) , & 0 \leq x < M_2 \\
                          G_{22}(x), & x \ge M_2
                        \end{cases}
\end{equation}
where,
\begin{equation}\label{APPENDIX B5}
    \begin{split}
         G_{21}(x) & = \int_{\mu_2=0}^{x} F_{X_2}(x-\mu_2)g_2(\mu_2)\, d\mu_2 \\
         & +(p_3+p_4)\left(1-e^{-\Omega_{SD}\Gamma_{th}}\right)e^{-\Omega_{R2D}\Gamma_{th}} \\
         & \times \int_{\mu_2=M_2}^{x+M_2} F_{X_2}(x+M_2-\mu_2)g_2(\mu_2)\, d\mu_2,
    \end{split}
\end{equation}
\begin{equation}\label{APPENDIX B6}
    \begin{split}
         G_{22}(x) & = \int_{\mu_2=0}^{M_2} F_{X_2}(x-\mu_2)g_2(\mu_2)\, d\mu_2 \\
         & + \left[1-\left(p_3+p_4\right)\left(1-e^{-\Omega_{SD}\Gamma_{th}}\right)e^{-\Omega_{R2D}\Gamma_{th}}\right] \\                   & \times \int_{\mu_2=M_2}^{x} F_{X_2}(x-\mu_2)g_2(\mu_2)\, d\mu_2 \\
         & +\left[\left(p_3+p_4\right)\left(1-e^{-\Omega_{SD}\Gamma_{th}}\right)e^{-\Omega_{R2D}\Gamma_{th}}\right]\\
         & \times \int_{\mu_2=M_2}^{x+M_2} F_{X_2}(x+M_2-\mu_2) g_2(\mu_2)\, d\mu_2.
    \end{split}
\end{equation}
According to Eq. (\ref{APPENDIX B4}), the PDF $g_2(x)$ may be defined as
\begin{equation}\label{APPENDIX B7}
            g_2(x) = \begin{cases}
                          g_{21}(x) , & 0 \leq x < M_2 \\
                          g_{22}(x), & x \ge M_2.
                        \end{cases}
\end{equation}
After substituting Eq. (\ref{APPENDIX B7}) into Eq. (\ref{APPENDIX B6}), the derivatives about $x$ on both sides of Eq. (\ref{APPENDIX B6}) may be obtained
\begin{equation}\label{APPENDIX B8}
    \begin{split}
             g_{22}(x) & = \int_{\mu_2=0}^{M_2} f_{X_2}(x-\mu_2)g_{21}(\mu_2)\, d\mu_2 \\
             & \quad + a_2 \int_{\mu_2=M_2}^{x} f_{X_2}(x-\mu_2)g_{22}(\mu_2)\, d\mu_2  \\
             & \quad + b_2 \int_{\mu_2=M_2}^{x+M_2} f_{X_2}(x+M_2-\mu_2)g_{22}(\mu_2)\, d\mu_2,
    \end{split}
\end{equation}
where,
\begin{equation}\label{APPENDIX B9}
            a_2=\left[1-\left(p_3+p_4\right)\left(1-e^{-\Omega_{SD}\Gamma_{th}}\right)e^{-\Omega_{R2D}\Gamma_{th}}\right],
\end{equation}
\begin{equation}\label{APPENDIX B10}
            b_2=\left[\left(p_3+p_4\right)\left(1-e^{-\Omega_{SD}\Gamma_{th}}\right)e^{-\Omega_{R2D}\Gamma_{th}}\right].
\end{equation}
Clearly, $a_2+b_2 = 1$. Furthermore, there is evidence that storage process in Eq. (\ref{eq10}) possesses a unique stationary distribution \cite[Appendix B]{34}, when $M_2>\mathbb{E}[X_2(i)]=1/\lambda_2$. In other words, $g_2(x)$ has unique solution. Furthermore, $g_{22}(x)$ be assumed to have an exponential-type solution expressed by $g_{22}(x) = k_2 e^{Q_2 x}$ \cite[Appendix C]{34}. Substituting $g_{22}(x) = k_2 e^{Q_2 x}$ and $f_{X_2}(x) = \lambda_2 e^{-\lambda_2 x}$ into Eq. (\ref{APPENDIX B8}), we obtain
\begin{equation}\label{APPENDIX B11}
    \begin{split}
             k_2 e^{Q_2 x} & = \int_{\mu_2=0}^{M_2} \lambda_2 e^{-\lambda_2 (x-\mu_2)} g_{21}(\mu_2)\, d\mu_2 \\
             & \quad + a_2 \int_{\mu_2=M_2}^{x} \lambda_2 e^{-\lambda_2 (x-\mu_2)} k_2 e^{Q_2 \mu_2}\, d\mu_2 \\
             & \quad + b_2 \int_{\mu_2=M_2}^{x+M_2} \lambda_2 e^{-\lambda_2 (x+M_2-\mu_2)} k_2 e^{Q_2 \mu_2}\, d\mu_2.
    \end{split}
\end{equation}
Simplifying Eq. (\ref{APPENDIX B11}), we have
\begin{equation}\label{APPENDIX B12}
    \begin{split}
             k_2 e^{Q_2 x} & = \lambda_2 e^{-\lambda_2 x}\int_{\mu_2=0}^{M_2} e^{\lambda_2 \mu_2} g_{21}(\mu_2)\, d\mu_2 \\
             & \quad - \dfrac{\lambda_2 k_2 \left[ b_2 + a_2 e^{\lambda_2 M_2}\right]}{\lambda_2 + Q_2} e^{\left(Q_2 M_2-\lambda_2 x\right)}\\
             & \quad + \dfrac{b_2 \lambda_2 e^{Q_2 M_2} + a_2 \lambda_2}{\lambda_2 + Q_2} k_2 e^{Q_2 x}.
    \end{split}
\end{equation}
For the correctness of postulating $g_{22}(x) = k_2 e^{Q_2 x}$, both sides of Eq. (\ref{APPENDIX B12}) must be equal. Hence ,we get
\begin{subequations}
    \begin{numcases}{}
          \dfrac{b_2 \lambda_2 e^{Q_2 M_2} + a_2 \lambda_2}{\lambda_2 + Q_2} = 1, \label{APPENDIX B13a}\\ 
          \dfrac{k_2 e^{Q_2 M_2} \left[ b_2 + a_2 e^{\lambda_2 M_2}\right]}{\lambda_2 + Q_2} = \int_{\mu_2=0}^{M_2} e^{\lambda_2 \mu_2} g_{21}(\mu_2)\, d\mu_2. \label{APPENDIX B13b}
    \end{numcases}
\end{subequations}
It can be seen from Eq. (\ref{APPENDIX B13a}) that $Q_{2_0} = 0$ is one of the solutions of $Q_2$ in Eq. (\ref{APPENDIX B13a}), but this solution does not meet the condition that $g_{22}(x)$ is a finite distribution. Additionally, the other solution $Q_{2_1}$ of $Q_2$ in Eq. (\ref{APPENDIX B13a}) can be obtained by simplifying Eq. (\ref{APPENDIX B13a}) as
\begin{equation}\label{APPENDIX B14}
             b_2 \lambda_2 e^{Q_2 M_2} = \lambda_2 - a_2 \lambda_2 + Q_2= b_2 \lambda_2 + Q_2.
\end{equation}
Using Lambert W function, the solution $Q_{2_1}$ of $Q_2$ can be obtained as follows
\begin{equation}\label{APPENDIX B15}
             Q_{2_1} = \frac{-W\left(-b_2\lambda_2 M_2 e^{-b_2\lambda_2 M_2}\right)}{M_2}-b_2\lambda_2.
\end{equation}
According to the property of Lambert W function, when $b_2\lambda_2 M_2 \leq 1$, $W\left(-b_2\lambda_2 M_2 e^{-b_2\lambda_2 M_2}\right)= -b_2\lambda_2 M_2$ so that $Q_{2_1}=Q_{2_0}=0$. On the contrary, when $b_2\lambda_2 M_2 >1$, $W\left(-b_2\lambda_2 M_2 e^{-b_2\lambda_2 M_2}\right)> -b_2\lambda_2 M_2$ so that $Q_{2_1}<0$, ensuring the finite distribution of $g_{22}(x)$. Thus, for the stationary distribution of $g_{22}(x)$, we obtain
\begin{equation}\label{APPENDIX B16}
             Q_2 = \frac{-W\left(-b_2\lambda_2 M_2 e^{-b_2\lambda_2 M_2}\right)}{M_2}-b_2\lambda_2,\quad b_2\lambda_2 M_2>1.
\end{equation}

Similarly, when $0\leq x <M_2$, substituting Eq. (\ref{APPENDIX B7}) into Eq. (\ref{APPENDIX B5}), the derivatives about $x$ on both sides of Eq. (\ref{APPENDIX B5}) can be obtained
\begin{equation}\label{APPENDIX B17}
    \begin{split}
             g_{21}(x) & = b_2 \int_{\mu_2=M_2}^{x+M_2} f_{X_2}(x+M_2-\mu_2)g_{22}(\mu_2)\, d\mu_2\\
             & \quad + \int_{\mu_2=0}^{x} f_{X_2}(x-\mu_2)g_{21}(\mu_2)\, d\mu_2.
    \end{split}
\end{equation}
Substituting $g_{22}(x)=k_2 e^{Q_2 x}$ and $f_{X_2}(x)=\lambda_2 e^{-\lambda_2 x}$ into Eq. (\ref{APPENDIX B17}), we get
\begin{equation}\label{APPENDIX B18}
    \begin{split}
             g_{21}(x) & = \lambda_2 \int_{\mu_2=0}^{x} e^{-\lambda_2 \left(x-\mu_2 \right)}g_{21}(\mu_2)\, d\mu_2 \\
             & \quad + \dfrac{b_2 k_2 \lambda_2 e^{Q_2 M_2}}{\lambda_2 + Q_2} \left(e^{Q_2 x}-e^{-\lambda_2 x} \right).
    \end{split}
\end{equation}
Let $r_2(x)=\frac{b_2 k_2 \lambda_2 e^{Q_2 M_2}}{\lambda_2 + Q_2} \left(e^{Q_2 x}-e^{-\lambda_2 x} \right)$, and the integral equation in Eq. (\ref{APPENDIX B18}) can be rewritten as follows
\begin{equation}\label{APPENDIX B19}
    \begin{split}
             g_{21}(x) = \lambda_2 \int_{\mu_2=0}^{x} e^{-\lambda_2 \left(x-\mu_2 \right)}g_{21}(\mu_2)\, d\mu_2 + r_2(x).
    \end{split}
\end{equation}
Clearly, Eq. (\ref{APPENDIX B19}) is a Volterra integral equation of the second kind, whose solution is given by \cite{6}, \cite{15} and \cite[eq. 2.2.1]{30}
\begin{equation}\label{APPENDIX B20}
    \begin{split}
             g_{21}(x) = r_2(x) + \lambda_2 \int_{t=0}^{x} r_2(t)\, dt.
    \end{split}
\end{equation}
Substituting $r_2(x)$ into Eq. (\ref{APPENDIX B20}), we obtain
\begin{equation}\label{APPENDIX B21}
    \begin{split}
             g_{21}(x) & = \dfrac{b_2 k_2 \lambda_2 e^{Q_2 M_2}}{\lambda_2 + Q_2} \left(e^{Q_2 x}-e^{-\lambda_2 x} \right) \\
             & \quad + \lambda_2 \int_{t=0}^{x} \frac{b_2 k_2 \lambda_2 e^{Q_2 M_2}}{\lambda_2 + Q_2} \left(e^{Q_2 t}-e^{-\lambda_2 t} \right)\, dt \\
             & = \dfrac{b_2 k_2 \lambda_2 e^{Q_2 M_2}\left(e^{Q_2 x}-1 \right)}{Q_2}.
    \end{split}
\end{equation}
According to the unit area condition on $g_2(x)$, we have
\begin{equation}\label{APPENDIX B22}
             \int_{x=0}^{\infty} g_2(x)\, dx = \int_{x=0}^{M_2} g_{21}(x)\, dx +\int_{x=M_2}^{\infty} g_{22}(x)\, dx = 1.
\end{equation}
Substituting $g_{21}(x) = \frac{b_2 k_2 \lambda_2 e^{Q_2 M_2}\left(e^{Q_2 x}-1 \right)}{Q_2}$ and $g_{22}(x) = k_2 e^{Q_2 x}$ into Eq. (\ref{APPENDIX B22}), we get
\begin{equation}\label{APPENDIX B23}
             \dfrac{b_2 k_2 \lambda_2 e^{Q_2 M_2}}{Q_2} \int_{x=0}^{M_2} \left(e^{Q_2 x}-1 \right)\, dx +  k_2 \int_{x=M_2}^{\infty} e^{Q_2 x}\, dx =1.
\end{equation}
Simplifying Eq. (\ref{APPENDIX B23}), we have
\begin{equation}\label{APPENDIX B24}
             \dfrac{b_2 k_2 \lambda e^{Q_2 M_2}}{Q_2} \left[\dfrac{e^{Q_2 M_2}-1}{Q_2}-M_2 \right] - \dfrac{k_2 e^{Q_2 M_2}}{Q_2} =1.
\end{equation}
Substituting Eq. (\ref{APPENDIX B14}) into Eq. (\ref{APPENDIX B24}), then simplifying Eq. (\ref{APPENDIX B24}), the value of $k_2$ can be obtained as follows
\begin{equation}\label{APPENDIX B25}
              k_2 = \dfrac{-Q_2}{M_2 \left(b_2 \lambda_2 + Q_2\right)}.
\end{equation}
Substituting Eq. (\ref{APPENDIX B14}) and Eq. (\ref{APPENDIX B25}) into Eq. (\ref{APPENDIX B21}), we arrive at
\begin{equation}\label{APPENDIX B26}
               g_{21}(x) =  \dfrac{1-e^{Q_2 x}}{M_2}.
\end{equation}
Substituting Eq. (\ref{APPENDIX B26}) into the right side of Eq. (\ref{APPENDIX B13b}), we obtain
\begin{equation}\label{APPENDIX B27}
    \begin{split}
             \int_{\mu_2=0}^{M_2} e^{\lambda_2 \mu_2} g_{21}(\mu_2)\, d\mu_2 & = \int_{\mu_2=0}^{M_2} \dfrac{\left(1-e^{Q_2 x}\right)e^{\lambda_2 \mu_2}}{M_2}\, d\mu_2 \\
             & = \dfrac{1-e^{\left(\lambda_2+Q_2\right)M_2}}{\left(\lambda_2+Q_2\right)M_2} - \dfrac{1-e^{\lambda_2 M_2}}{\lambda_2 M_2}.
    \end{split}
\end{equation}
The equation in Eq. (\ref{APPENDIX B14}) leads us to conclude $\lambda_2 M_2 = \frac{\left(\lambda_2+Q_2\right)M_2}{b_2 e^{Q_2 M_2} + a_2}$. Substituting this conclusion in Eq. (\ref{APPENDIX B27}), we have
\begin{equation}\label{APPENDIX B28}
    \begin{split}
             \int_{\mu_2=0}^{M_2} e^{\lambda_2 \mu_2} g_{21}(\mu_2)\, d\mu_2 = \dfrac{\left(1-e^{Q_2 M_2}\right)\left(b_2+a_2 e^{\lambda_2 M_2}\right)}{\left(\lambda_2+Q_2\right)M_2}.
    \end{split}
\end{equation}
Similarly, the conclude $1-e^{Q_2 M_2} = \frac{-Q_2}{b_2\lambda_2}$ may be obtained from Eq. (\ref{APPENDIX B14}). Substituting this conclusion in Eq. (\ref{APPENDIX B28}), we arrive at
\begin{equation}\label{APPENDIX B29}
    \begin{split}
             \int_{\mu_2=0}^{M_2} e^{\lambda_2 \mu_2} g_{21}(\mu_2)\, d\mu_2 & = \dfrac{-Q_2 \left(b_2+a_2 e^{\lambda_2 M_2}\right)}{\left(\lambda_2+Q_2\right)M_2 b_2\lambda_2} \\
             & = \dfrac{-Q_2 e^{Q_2 M_2} \left(b_2+a_2 e^{\lambda_2 M_2}\right)}{\left(\lambda_2+Q_2\right)M_2 b_2\lambda_2 e^{Q_2 M_2}} \\
             & = \dfrac{-Q_2 e^{Q_2 M_2} \left(b_2+a_2 e^{\lambda_2 M_2}\right)}{M_2 \left(b_2\lambda_2 + Q_2\right) \left(\lambda_2+Q_2\right)} \\
             & = \dfrac{k_2 e^{Q_2 M_2} \left(b_2+a_2 e^{\lambda_2 M_2}\right)}{\lambda_2+Q_2}.
    \end{split}
\end{equation}
Now, it can be shown that $g_{21}(x)$ satisfies the condition in Eq. (\ref{APPENDIX B13b}). Therefore, there is no doubt that the unique solution $g_{21}(x)$ in Eq. (\ref{APPENDIX B26}) for Eq. (\ref{APPENDIX B17}) and the unique solution $g_{22}(x) = k_2 e^{Q_2 x}$ for Eq. (\ref{APPENDIX B8}) are obtained.

\section{Proof of Theorem 3}
According to the total probability theorem, $\mathbb{P}_{11}$ and $\mathbb{P}_{12}$,  the CDF of $B_1(i+1)$ in storage process in Eq. (\ref{eq15}) may be evaluated as follows
\begin{equation}\label{APPENDIX C 00}
    \begin{split}
         & \mathrm{Pr}\{B_1(i+1) \leq x\} \\
         & \quad = \mathrm{Pr}\{B_1(i)+X_1(i) \leq x, \textbf{S}(i) = \textbf{s}_1\} \\
         & \quad + \mathrm{Pr}\{B_1(i)+X_1(i) \leq x, \textbf{S}(i) = \textbf{s}_3\} \\
         & \quad + \mathrm{Pr}\{\Theta_1\} + \mathrm{Pr}\{\Theta_2\} \\
         & \quad + \mathrm{Pr}\{B_1(i)+X_1(i)-M_1 \leq x,\textbf{S}(i)=\textbf{s}_2, C_5(i)\}\\
         & \quad + \mathrm{Pr}\{B_1(i)+X_1(i)-M_1 \leq x,\textbf{S}(i)=\textbf{s}_2, C_8(i)\}\\
         & \quad + \mathrm{Pr}\{B_1(i)+X_1(i)-M_1 \leq x,\textbf{S}(i)=\textbf{s}_4, C_{10}(i)\}\\
         & \quad + \mathrm{Pr}\{B_1(i)+X_1(i)-M_1 \leq x,\textbf{S}(i)=\textbf{s}_4, C_{11}(i)\},
    \end{split}
\end{equation}
where, $\mathrm{Pr}\{\Theta_1\}$ is presented as follows
\begin{equation}\label{APPENDIX C 01}
    \begin{split}
         & \mathrm{Pr}\{\Theta_1\} \\
         & \quad =  \mathrm{Pr}\{B_1(i)+X_1(i) \leq x, \textbf{S}(i) = \textbf{s}_2, C_1(i) \} \\
         & \quad + \mathrm{Pr}\{B_1(i)+X_1(i) \leq x, \textbf{S}(i) = \textbf{s}_2,\overline{C_1}(i), B_1(i)< M_1 \} \\
         & \quad + \mathrm{Pr}\{B_1(i)+X_1(i) \leq x, \textbf{S}(i) = \textbf{s}_2,\overline{C_1}(i), B_1(i)\ge M_1, \\
         & \quad \gamma_{R1D}(i)< \Gamma_{th},\gamma_{SR2}(i)\ge \Gamma_{th} \} \\
         & \quad + \mathrm{Pr}\{B_1(i)+X_1(i) \leq x, \textbf{S}(i) = \textbf{s}_2,\overline{C_1}(i), B_1(i)\ge M_1, \\
         & \quad \gamma_{R1D}(i)< \Gamma_{th},\gamma_{SR2}(i)\ge \Gamma_{th},\gamma_{R1R2}(i)< \Gamma_{th}, \}, \\
    \end{split}
\end{equation}
and $\mathrm{Pr}\{\Theta_2\}$ is given by
\begin{equation}\label{APPENDIX C 0}
    \begin{split}
         & \mathrm{Pr}\{\Theta_2\} \\
         & \quad =  \mathrm{Pr}\{B_1(i)+X_1(i) \leq x, \textbf{S}(i) = \textbf{s}_4, C_1(i) \} \\
         & \quad + \mathrm{Pr}\{B_1(i)+X_1(i) \leq x, \textbf{S}(i) = \textbf{s}_4, C_9(i) \} \\
         & \quad + \mathrm{Pr}\{B_1(i)+X_1(i) \leq x, \textbf{S}(i) = \textbf{s}_4,\overline{C_1}(i), B_2(i)\ge M_2, \\
         & \quad \gamma_{R2D}(i)< \Gamma_{th},B_1(i)\ge M_1,\gamma_{R1D}(i)< \Gamma_{th} \} \\
         & \quad + \mathrm{Pr}\{B_1(i)+X_1(i) \leq x, \textbf{S}(i) = \textbf{s}_4,\overline{C_1}(i), B_2(i)\ge M_2, \\
         & \quad \gamma_{R2D}(i)< \Gamma_{th},B_1(i)< M_1 \} \\
         & \quad + \mathrm{Pr}\{B_1(i)+X_1(i) \leq x, \textbf{S}(i) = \textbf{s}_4,\overline{C_1}(i), B_2(i)< M_2, \\
         & \quad B_1(i)\ge M_1,\gamma_{R1D}(i)< \Gamma_{th} \} \\
         & \quad + \mathrm{Pr}\{B_1(i)+X_1(i) \leq x, \textbf{S}(i) = \textbf{s}_4,\overline{C_1}(i), B_2(i)< M_2, \\
         & \quad B_1(i)< M_1 \}.
    \end{split}
\end{equation}
Since $\gamma_{SD}$, $\gamma_{SR2}$, $\gamma_{R1D}$, $\gamma_{R1R2}$ and $\gamma_{R2D}$ obey the exponential distribution with parameters $\Omega_{SD}$, $\Omega_{SR2}$, $\Omega_{R1D}$, $\Omega_{R1R2}$ and $\Omega_{R2D}$, respectively. In addition, the probabilities of $\mathrm{Pr}\{B_2(i)\ge M_R\}$ has been given in Eq. (\ref{eq141}). When $i \to \infty$, Eq. (\ref{APPENDIX C 00}) is written as follows
\begin{equation}\label{APPENDIX C 2}
    \begin{split}
         G_1(x) & = a_{11} \int_{\mu_1=0}^{x} F_{X_1}(x-\mu_1)g_1(\mu_1)\, d\mu_1 \\
         & \quad + a_{12} \int_{\mu_1=0}^{min(x,M_1)} F_{X_1}(x-\mu_1)g_1(\mu_1)\, d\mu_1 \\
         & \quad + a_{13} \int_{\mu_1=M_1}^{x} F_{X_1}(x-\mu_1)g_1(\mu_1)\, d\mu_1 \\
         & \quad + b_1 \int_{\mu_1=M_1}^{x+M_1} F_{X_1}(x+M_1-\mu_1)g_1(\mu_1)\, d\mu_1,
    \end{split}
\end{equation}
where $g_1(x)$ is the derivative of $G_1(x)$, $F_{X_1}(x)=1-e^{-\lambda_1 x}$ is the CDF of $X_1$. In addition, $a_{11}$ is denoted by
\begin{equation}\label{APPENDIX C 21}
    \begin{split}
         a_{11} & = p_1 + p_3 + \left(p_2 + p_4\right)e^{-\Omega_{SD}\Gamma_{th}} \\
         & \quad + \frac{p_4 \left(1-e^{-\Omega_{SD}\Gamma_{th}}\right)e^{-\Omega_{R2D}\Gamma_{th}}}{M_2 b_2 \lambda_2},
    \end{split}
\end{equation}
$a_{12}$ is given by
\begin{equation}\label{APPENDIX C 22}
    \begin{split}
         a_{12} & = \bigg[p_2 + p_4 \left(1-\frac{1}{M_2 b_2 \lambda_2}\right)\bigg]\left(1-e^{-\Omega_{SD}\Gamma_{th}}\right)\\
         & \quad + \frac{p_4 \left(1-e^{-\Omega_{SD}\Gamma_{th}}\right)\left(1- e^{-\Omega_{R2D}\Gamma_{th}}\right)}{M_2 b_2 \lambda_2},
    \end{split}
\end{equation}
$a_{13}$ is written by
\begin{equation}\label{APPENDIX C 23}
    \begin{split}
         a_{13} = & \left(1-e^{-\Omega_{SD}\Gamma_{th}}\right)\left(1-e^{-\Omega_{R1D}\Gamma_{th}}\right)\\
         & \times \bigg[p_2-p_2\left(1-e^{-\Omega_{SR2}\Gamma_{th}}\right)e^{-\Omega_{R1R2}\Gamma_{th}} \\
         & + p_4\left(1-\frac{e^{-\Omega_{R2D}\Gamma_{th}}}{M_2 b_2 \lambda_2}\right) \bigg],
    \end{split}
\end{equation}
and $b_1$ is denoted by
\begin{equation}\label{APPENDIX C 24}
    \begin{split}
         b_1 &=  \left(1-e^{-\Omega_{SD}\Gamma_{th}}\right)\bigg[p_2e^{-\Omega_{R1D}\Gamma_{th}}\\
         & + p_2\left(1-e^{-\Omega_{R1D}\Gamma_{th}}\right)\left(1-e^{-\Omega_{SR2}\Gamma_{th}}\right)e^{-\Omega_{R1R2}\Gamma_{th}} \\
         & + p_4e^{-\Omega_{R1D}\Gamma_{th}}\left(1-\frac{e^{-\Omega_{R2D}\Gamma_{th}}}{M_2 b_2 \lambda_2}\right) \bigg].
    \end{split}
\end{equation}
Furthermore, the $g_1(x)$ is defined as follows
\begin{equation}\label{APPENDIX C 6}
            g_1(x) = \begin{cases}
                          g_{11}(x), & 0 \leq x < M_1 \\
                          g_{12}(x), & x \ge M_1.
                        \end{cases}
\end{equation}
When $x \ge M_1$, substituting Eq. (\ref{APPENDIX C 6}) into Eq. (\ref{APPENDIX C 2}), the derivatives about $x$ on both sides of Eq. (\ref{APPENDIX C 2}) can be obtained as follows
\begin{equation}\label{APPENDIX C 7}
    \begin{split}
             g_{12}(x) & = \left(a_{11}+a_{12}\right)\int_{\mu_1=0}^{M_1} f_{X_1}(x-\mu_1)g_{11}(\mu_1)\, d\mu_1 \\
             & + \left(a_{11}+a_{13}\right) \int_{\mu_1=M_1}^{x} f_{X_1}(x-\mu_1)g_{12}(\mu_1)\, d\mu_1 \\
             & + b_1 \int_{\mu_1=M_1}^{x+M_1} f_{X_1}(x+M_1-\mu_1)g_{12}(\mu_1)\, d\mu_1.
    \end{split}
\end{equation}
From Eq. (\ref{APPENDIX C 21}) to Eq. (\ref{APPENDIX C 24}), it is easy to know
\begin{equation}\label{APPENDIX C 71}
    \begin{split}
             a_{11}+a_{12}  = a_{11}+a_{13}+b_1 = p_1+p_2+p_3+p_4=1.
    \end{split}
\end{equation}
Like $g_{22}(x)$ in appendix B, let $g_{12}(x) = k_1 e^{Q_1 x}$. Moreover, substituting $g_{12}(x) = k_1 e^{Q_1 x}$ and $f_{X_1}(x) = \lambda_1 e^{-\lambda_1 x}$ and Eq. (\ref{APPENDIX C 71}) into Eq. (\ref{APPENDIX C 7}), we get
\begin{equation}\label{APPENDIX C 10}
    \begin{split}
         k_1 e^{Q_1 x} &= \int_{\mu_1=0}^{M_1} \lambda_1 e^{-\lambda_1 (x-\mu_1)} g_{11}(\mu_1)\, d\mu_1 \\
         & + \left(1-b_1\right) \int_{\mu_1=M_1}^{x} \lambda_1 e^{-\lambda_1 (x-\mu_1)} k_1 e^{Q_1 \mu_1}\, d\mu_1 \\
         & + b_1 \int_{\mu_1=M_1}^{x+M_1} \lambda_1 e^{-\lambda_1 (x+M_1-\mu_1)} k_1 e^{Q_1 \mu_1}\, d\mu_1.
    \end{split}
\end{equation}
Let $a_{11}+a_{13} = a_1$ and simplify Eq. (\ref{APPENDIX C 10}), we obtain
\begin{equation}\label{APPENDIX C 11}
    \begin{split}
             k_1 e^{Q_1 x} & = \lambda_1 e^{-\lambda_1 x}\int_{\mu_1=0}^{M_1} e^{\lambda_1 \mu_1} g_{11}(\mu_1)\, d\mu_1 \\
             & \quad - \dfrac{k_1 \lambda_1 e^{\left(Q_1 M_1-\lambda_1 x \right)} \left(a_1 e^{\lambda_1 M_1}+b_1 \right)}{\lambda_1 + Q_1} \\
             & \quad + \dfrac{\lambda_1 \left( a_1 + b_1 e^{Q_1 M_1} \right)k_1 e^{Q_1 x}}{\lambda_1 + Q_1}.
    \end{split}
\end{equation}
For Eq. (\ref{APPENDIX C 11}) to hold, the following conditions need to be satisfied
\begin{subequations}
    \begin{numcases}{}
          \dfrac{\lambda_1 \left( a_1 + b_1 e^{Q_1 M_1} \right)}{\lambda_1 + Q_1} = 1, \label{APPENDIX C 12a}\\
          \dfrac{k_1 e^{Q_1 M_1} \left(a_1 e^{\lambda_1 M_1}+b_1 \right)}{\lambda_1 + Q_1} = \int_{\mu_1=0}^{M_1} e^{\lambda_1 \mu_1} g_{11}(\mu_1)\, d\mu_1. \label{APPENDIX C 12b}
    \end{numcases}
\end{subequations}
Similarly, the desirable solution $Q_1$ of Eq. (\ref{APPENDIX C 12a}) for the finite distribution of $g_{12}(x)$ may be obtained by simplifying Eq. (\ref{APPENDIX C 12a}) as follows
\begin{equation}\label{APPENDIX C 13}
             b_1 \lambda_1 e^{Q_1 M_1} = \lambda_1 - a_1 \lambda_1 + Q_1= b_1 \lambda_1 + Q_1.
\end{equation}
Using Lambert W function, the solution $Q_1$ of Eq. (\ref{APPENDIX C 13}) may be obtained as follows
\begin{equation}\label{APPENDIX C 14}
             Q_1 = \frac{-W\left(-b_1\lambda_1 M_1 e^{-b_1\lambda_1 M_1}\right)}{M_1}-b_1\lambda_1, \quad b_1\lambda_1 M_1>1,
\end{equation}
where, since $b_1\lambda_1 M_1 >1$, $W\left(-b_1\lambda_1 M_1 e^{-b_1\lambda_1 M_1}\right)> -b_1\lambda_1 M_1$. Furthermore, $Q_1<0$, which ensures the finite distribution of $g_{12}(x)$.

When $0\leq x <M_1$, substituting Eq. (\ref{APPENDIX C 6}) into Eq. (\ref{APPENDIX C 2}), the derivatives about $x$ on both sides of Eq. (\ref{APPENDIX C 2}) can be obtained as follows
\begin{equation}\label{APPENDIX C 16}
    \begin{split}
             g_{11}(x) & = b_1 \int_{\mu_1=M_1}^{x+M_1} f_{X_1}(x+M_1-\mu_1)g_{12}(\mu_1)\, d\mu_1\\
             & \quad + \int_{\mu_1=0}^{x} f_{X_1}(x-\mu_1)g_{11}(\mu_1)\, d\mu_1.
    \end{split}
\end{equation}
Substituting $g_{12}(x)=k_1 e^{Q_1 x}$ and $f_{X_1}(x)=\lambda_1 e^{-\lambda_1 x}$ into Eq. (\ref{APPENDIX C 16}), we get
\begin{equation}\label{APPENDIX C 17}
    \begin{split}
             g_{11}(x) & = \lambda_1 \int_{\mu_1=0}^{x} e^{-\lambda_1 \left(x-\mu_1 \right)}g_{11}(\mu_1)\, d\mu_1 \\
             & \quad + \dfrac{b_1 k_1 \lambda_1 e^{Q_1 M_1}}{\lambda_1 + Q_1} \left(e^{Q_1 x}-e^{-\lambda_1 x} \right).
    \end{split}
\end{equation}
Similar to $ g_{21}(x)$ in appendix B, the solution of $g_{11}(x)$ may be given as follows
\begin{equation}\label{APPENDIX C 20}
    \begin{split}
             g_{11}(x) & = \dfrac{b_1 k_1 \lambda_1 e^{Q_1 M_1}}{\lambda_1 + Q_1} \left(e^{Q_1 x}-e^{-\lambda_1 x} \right) \\
             & \quad + \lambda_1 \int_{t=0}^{x} \dfrac{b_1 k_1 \lambda_1 e^{Q_1 M_1}}{\lambda_1 + Q_1} \left(e^{Q_1 t}-e^{-\lambda_1 t} \right)\, dt \\
             & = \dfrac{b_1 k_1 \lambda_1 e^{Q_1 M_1}\left(e^{Q_1 x}-1 \right)}{Q_1}.
    \end{split}
\end{equation}
Because of the unit area condition on $g_1(x)$, we have
\begin{equation}\label{APPENDIX C 21}
             \int_{x=0}^{\infty} g_1(x)\, dx = \int_{x=0}^{M_1} g_{11}(x)\, dx +\int_{x=M_1}^{\infty} g_{12}(x)\, dx = 1.
\end{equation}
Substituting $g_{11}(x) = \frac{b_1 k_1 \lambda_1 e^{Q_1 M_1}\left(e^{Q_1 x}-1 \right)}{Q_1}$ and $g_{12}(x) = k_1 e^{Q_1 x}$ into Eq. (\ref{APPENDIX C 21}), we arrive at
\begin{equation}\label{APPENDIX C 24}
              k_1 = \dfrac{-Q_1}{M_1 \left(b_1 \lambda_1 + Q_1\right)}.
\end{equation}
Furthermore, according to Eq. (\ref{APPENDIX C 24}), we have
\begin{equation}\label{APPENDIX C 25}
               g_{11}(x) =  \dfrac{1-e^{Q_1 x}}{M_1}.
\end{equation}
Similar to the validation of Eq. (\ref{APPENDIX B13b}), according to Eq. (\ref{APPENDIX C 13}), Eq. (\ref{APPENDIX C 24}) and Eq. (\ref{APPENDIX C 25}), the Eq. (\ref{APPENDIX C 12b}) may be validated as follows
\begin{equation}\label{APPENDIX C 28}
    \begin{split}
             \int_{\mu_1=0}^{M_1} e^{\lambda_1 \mu_1} g_{11}(\mu_1)\, d\mu_1 & =\dfrac{\left(1-e^{Q_1 M_1}\right)\left(b_1+a_1 e^{\lambda_1 M_1}\right)}{\left(\lambda_1+Q_1\right)M_1} \\
             & = \dfrac{-Q_1 \left(b_1+a_1 e^{\lambda_1 M_1}\right)}{\left(\lambda_1+Q_1\right)M_1 b_1\lambda_1} \\
             & = \dfrac{-Q_1 e^{Q_1 M_1} \left(b_1+a_1 e^{\lambda_1 M_1}\right)}{M_1 \left(b_1\lambda_1 + Q_1\right) \left(\lambda_1+Q_1\right)} \\
             & = \dfrac{k_1 e^{Q_1 M_1} \left(b_1+a_1 e^{\lambda_1 M_1}\right)}{\lambda_1+Q_1}.
    \end{split}
\end{equation}
Now, the validation of Eq. (\ref{APPENDIX C 12b}) in  Eq. (\ref{APPENDIX C 28}) indicates that the unique solution $g_{11}(x)$ in Eq. (\ref{APPENDIX C 25}) for Eq. (\ref{APPENDIX C 16}) and the unique solution $g_{12}(x) = k_1 e^{Q_1 x}$ for Eq. (\ref{APPENDIX C 7}) are obtained.

%
%




\end{document}